\def\@settitle{\begin{center}\baselineskip14\p@\relax
  \bfseries
  \uppercasenonmath\@title
  \@title
  \\\mdseries\footnotesize *~~*~~*
  \ifx\@subtitle\@empty\else
     \\[-0.6ex]\uppercasenonmath\@subtitle
     \footnotesize\mdseries\@subtitle
  \fi
  \end{center}}
\def\subtitle#1{\gdef\@subtitle{#1}}
\def\@subtitle{}
\definecolor{duckBlue}{HTML}{005A66}
\colorlet{goldOrange}{orange!80!black}
\newcommand{\algorithmstyle}[1]{\renewcommand{\algocf@style}{#1}}
\thetcbcounter\ifstrempty{#1}{\ignorespaces}{:~#1}}}}
\newtcolorbox[auto counter]{algo}[1][]{colback=white,
    colframe=duckBlue,
    boxrule=1pt,
    titlerule=0pt,
    sharp corners=all,
    colbacktitle=white,enhanced,
    attach boxed title to top center={yshift=-10pt},
    boxed title style={boxrule=-1pt},
    fonttitle=\sffamily,
    coltitle=duckBlue,
    algotitle={},
    #1
}
\theoremstyle{theorem}
\newtheorem{heuristic}{Heuristic}[section]
\newtheorem{definition}{Definition}[section]
\newtheorem{theorem}{Theorem}[section]
\newtheorem{lemma}{Lemma}[section]
\newtheorem{proposition}{Proposition}[section]
\newtheorem{corollary}{Corollary}[section]
\newtheorem*{remark}{Remark}
\newtheorem*{exemple}{Exemple}
\let\originalleft\left
\let\originalright\right
\renewcommand{\left}{\mathopen{}\mathclose\bgroup\originalleft}
\renewcommand{\right}{\aftergroup\egroup\originalright}
\newcommand{\LLL}{\mbox{\textsc{lll}}}
\newcommand{\SVP}{\mbox{\textsc{svp}}}
\newcommand{\HSVP}{\mbox{\textsc{hsvp}}}
\newcommand{\DBKZ}{\mbox{\textsc{dbkz}}}
\DeclareMathOperator{\rk}{rk}
\DeclareMathOperator{\covol}{covol}
\DeclareMathOperator*{\argmin}{argmin}
\newcommand{\dual}[1]{{#1}^{\vee}}
\newcommand{\orth}[1]{#1_{\bot}}
\newcommand{\vect}[1]{{#1}}
\newcommand{\inner}[2]{\langle \vect{#1},\vect{#2} \rangle}
\newcommand{\ZZ}{\mathbf{Z}}
\newcommand{\RR}{\mathbf{R}}
\newcommand{\QQ}{\mathbf{Q}}
\newcommand{\Lat}{\Lambda}
\newcommand{\NP}{\mbox{\textbf{NP}\xspace}}
\newcommand{\CVP}{\mbox{\textsc{cvp}\xspace}}
\newcommand{\ACVP}{\mbox{\textsc{approx-cvp}\xspace}}
\newcommand{\ASVP}{\mbox{\textsc{approx-svp}\xspace}}
\newcommand{\ACVPP}{\mbox{\textsc{approx-cvpp}\xspace}}
\newcommand{\algName}[1]{{\color{duckBlue}\textbf{\textsf{#1}}\color{black}}}
\newcommand{\bigO}[1]{\mathrm O\left(#1\right)}
\newcommand{\littleO}[1]{\mathrm{o}\left(#1\right)}
\newcommand{\oracle}{$\mathcal{O}$}
\newcommand{\SL}{\mathrm{SL}}
\newcommand{\dist}{\mathrm{dist}}
\newcommand{\Span}{\mathrm{span}}
\begin{document}
\clearpage{}\newcommand{\plane}[1]{
	(-1.95, #1, 1.35) --
	++(3.6, 0.6, 0.0) --
	++(0.3, -1.8, -2.7) --
	++(-3.6, -0.6, -0.0) --
	cycle
}

\newcommand{\nearestplane}{
\draw[dotted] (0,-1.3,0) -- (0,-0.6,0);
	\draw[->] (0, 1.0,0) -- (0,2,0);
	\draw[-, dotted] (0, 0.28,0) -- (0,1.0,0);
	\draw[-,dotted] (0, -0.6,0) -- (0,0.2,0);
  \node[above]   (0) at (0,2,0) {\footnotesize ${\Lambda}/{\Lambda_2}$};
	\draw[goldOrange,thick,  ->] (0,-0.6,0) -- (1.8,0.68,0);
  \draw[goldOrange, fill] (1.8,0.68,0) circle  [radius=.5pt];
  \node[right]  (3) at (1.85,0.55,0) {\color{goldOrange}\footnotesize$t$};
	\draw[fill=gray!20, opacity=0.3]\plane{-0.8};
  \draw[fill] (0,-1.3,0) circle  [radius=.5pt];
	\draw[fill=gray!30!gray, opacity=0.6]\plane{0};
  \node[right] (0) at (2.45, 0.95) {\color{duckBlue}\footnotesize$v+\Lambda_2$};
  \draw[fill] (0,-0.6,0) circle  [radius=.5pt];
  \node[right]   (0) at (2.45, -0.65) {\footnotesize$\Lambda_2$};
	\draw[fill=gray!20, opacity=0.3]\plane{0.8};
	\draw[fill=orange, opacity=0.3]\plane{1.38};
  \node[right] (0) at (2.45, 0.71) {\color{goldOrange}\footnotesize$t+\Lambda_2$};
	\draw[fill=duckBlue, opacity=0.3]\plane{1.6};
  \draw[fill] (0,0.2,0) circle  [radius=.5pt];
	\draw[-] (0, -1.3,0) -- (0,-1.1,0);
	\draw[-] (0, 0.2,0) -- (0,0.28,0);
	\draw[-] (0, -0.6,0) -- (0,-0.3,0);

\draw[fill=duckBlue, color=duckBlue] (0,1.0,0) circle  [radius=.5pt];
  \draw[fill=goldOrange, color=goldOrange] (0,0.78,0) circle  [radius=.5pt];

  \node[right]  (1) at (-0.25,-0.6) {\footnotesize 0};
  \node[right]  (4) at (-0.59,1.0,0) {\color{duckBlue}\footnotesize$\pi_2(v)$};
  \node[right]  (5) at (-0.59,0.78,0)
  {\color{goldOrange}\footnotesize$\pi_2(t)$};
}

\newcommand{\nearestcospace}{
\draw[dotted] (0,-1.3,0) -- (0,-0.6,0);
  \draw[] (0,-1.3,0) -- (0,-1.09,0);
	\draw[color=duckBlue, thick, dotted] (1,-1.3,0) -- (1,-0.6,0);
	\draw[color=goldOrange, dotted, thick] (1.2,-1.3,0) -- (1.2,-0.6,0);
  \draw[fill=duckBlue, color=duckBlue] (1,-0.6,0) circle  [radius=.5pt];
  \draw[fill=goldOrange, color=goldOrange] (1.2,-0.6,0) circle  [radius=.5pt];
	\draw[color=duckBlue, thick] (1,-1.3,0) -- (1,-0.9,0);
	\draw[color=goldOrange, thick] (1.2,-1.3,0) -- (1.2,-0.88,0);
	\draw[gray,dotted] (-1,-1.3,0) -- (-1,-0.6,0);
  \draw[fill] (-1,-0.6,0) circle  [radius=.5pt];
	\draw[gray,] (-1,-1.3,0) -- (-1,-1.27,0);
	\draw[gray,dotted] (0,-1.3,1) -- (0,-0.6,1);
  \draw[fill] (0,-0.6,1) circle  [radius=.5pt];
	\draw[gray,] (0,-1.3,1) -- (0,-0.78,1);
	\draw[gray,dotted] (0,-1.3,-1) -- (0,-0.6,-1);
  \draw[fill] (0,-0.6,-1) circle  [radius=.5pt];
	\draw[gray,dotted] (-1,-1.3,1) -- (-1,-0.6,1);
  \draw[fill] (1,-0.6,1) circle  [radius=.5pt];
	\draw[gray,] (-1,-1.3,1) -- (-1,-0.78,1);
	\draw[gray,dotted] (-1,-1.3,-1) -- (-1,-0.6,-1);
  \draw[fill] (-1,-0.6,-1) circle  [radius=.5pt];
  \node[right]  (3) at (-0.15,0.4,0) {\footnotesize$\Lat_1$};
  \draw[fill] (-1,-0.6,1) circle  [radius=.5pt];

	\draw[fill=white!70!gray, opacity=0.6]\plane{0};
	\draw[->] (0,-0.6,0) -- (1.2,0.68,0);

	\draw[] (0,-0.6,0) -- (0,0.21,0);

	\draw[color=duckBlue, thick] (1,-0.6,0) -- (1,1,0);
	\draw[color=goldOrange, thick] (1.2,-0.6,0) -- (1.2,1,0);

	\draw[gray,] (-1,-0.6,0) -- (-1,0.05,0);
	\draw[gray,] (0,-0.6,1) -- (0,0.53,1);
	\draw[gray,] (0,-0.6,-1) -- (0,-0.11,-1);
	\draw[gray,] (-1,-0.6,1) -- (-1,0.55,1);
	\draw[gray,] (-1,-0.6,-1) -- (-1,-0.28,-1);
  \draw[gray,] (1,-1.3,1) -- (1,0.7,1);

  \node[right]  (3) at (1.2,0.65,0) {\color{goldOrange}\footnotesize$t$};
  \node[right]  (3) at (1.19,-0.6,0)
  {\color{goldOrange}\tiny$\pi_1(t)$};
  \node[right]  (3) at (1.1,1.1,0) {\color{goldOrange}\footnotesize$t+\Lambda_{1}$};
  \node[right]  (3) at (0.5,1.1,0) {\color{duckBlue}\footnotesize$v+\Lambda_{1}$};
  \node[right]  (3) at (2.5,-0.68,0)
  {\footnotesize$\faktor{\Lat}{\Lambda_{1}}$};
  \node[right]  (3) at (0.45,-0.6, 0) {\color{duckBlue}\tiny$\pi_1(v)$};
  \node[right]  (1) at (-0.25,-0.6) {\footnotesize 0};
\draw[fill] (0,-0.6,0) circle  [radius=.5pt];
  \draw[fill=goldOrange,color=goldOrange] (1.2,0.68,0) circle  [radius=.5pt];
}

\clearpage{}

\title{The nearest-colattice algorithm}
\subtitle{\emph{Time-approximation tradeoff for approx-CVP}}
\author{Thomas Espitau${}^\divideontimes$ \and Paul Kirchner${}^\star$}
\address{${}^\divideontimes$ NTT Corporation, Toky\=o, Japan\\ ${}^\star$
Rennes University, Rennes, France
}
\email{t.espitau@gmail.com, paul.kirchenr@irisa.fr}

\begin{abstract}
  In this work, we exhibit a hierarchy of polynomial time algorithms solving
  approximate variants of the Closest Vector Problem (\CVP).  Our first
  contribution is a heuristic algorithm achieving the same distance tradeoff
  as \HSVP{} algorithms, namely $\approx
  \beta^{\frac{n}{2\beta}}\covol(\Lat)^{\frac{1}{n}}$ for a random lattice
  $\Lat$ of rank $n$.  Compared to the so-called Kannan's embedding technique,
  our algorithm allows using precomputations and can be used for efficient
  batch \CVP~instances.  This implies that some attacks on lattice-based
  signatures lead to very cheap forgeries, after a precomputation.  Our second
  contribution is a \emph{proven} reduction from approximating the closest
  vector with a factor $\approx n^{\frac32}\beta^{\frac{3n}{2\beta}}$ to the
  Shortest Vector Problem (\SVP) in dimension $\beta$.
\end{abstract}

\maketitle

\section{Introduction}
\label{sec:introduction}

\subsection*{Lattices, CVP, SVP}
In a general setting, a real \emph{lattice} $\Lat$ is a finitely
generated free $\ZZ$-module, endowed with a positive-definite quadratic
form on its ambient space $\Lat\otimes_\ZZ\RR$, or equivalently is a discrete
subgroup of a Euclidean space.

A fundamental lattice problem is the \emph{Closest Vector Problem}, or
\CVP~for short. The goal of this problem is to find a
lattice point that is closest to a given point in its ambient space.
This problem is provably difficult to solve, being actually a
\NP-hard problem.
It is known to be harder than the \emph{Shortest Vector Problem}
(\SVP)~\cite{goldreich1999approximating}, which asks for the
shortest non-zero lattice point. \SVP{} is, in turn, the cornerstone of
lattice reduction algorithms (see for
instance~\cite{Schnorr87,C:HanPujSte11,EC:MicWal16}). These algorithms are at
the heart of lattice-based cryptography~\cite{regev2009lattices}, and are
invaluable in plenty of computational problems, including Diophantine
approximation, algebraic number theory or optimization
(see~\cite{nguyen2010lll} for a survey on the applications of the \LLL{}
algorithm).

\subsection*{On CVP-solving algorithms}
There are three families of algorithms solving \CVP:
\begin{description}
  \item[Enumeration algorithms] consisting in recursively explore all vectors
    in a set containing a closest vector. Kannan's algorithm takes time
    $n^{\bigO{n}}$ and polynomial space~\cite{kannan1987minkowski}. This
    estimate was later refined to $n^{\frac{n}{2}+\littleO{n}}$ by Hanrot and
    Stehl\'e~\cite{C:HanSte07}.

  \item[Voronoi cell computation] Micciancio and Voulgaris' Voronoi cell
    algorithm solves \CVP~in time $(4+\littleO{1})^n$ but uses a space of
    $(2+\littleO{1})^n$~\cite{SODA:MicVou10}.

  \item[Sieving algorithms] where
    vectors are combined in order to get closer and closer to the target
    vector. Heuristic variants take time as low as
    $(4/3+\littleO{1})^{\frac{n}{2}}$~\cite{SODA:BDGL16},
    but proven variants of classical
    sieves~\cite{ajtai2002sampling,blomer2009sampling,eisenbrand2011covering}
    could only solve \CVP{} with approximation factor $1+\epsilon$ at a cost
    in the exponent. In 2015, a $(2+\littleO{1})^n$ sieve for \emph{exact}
    \CVP{} was finally proven by Aggarwal, Dadush and
    Stephen-Davidowitz~\cite{FOCS:AggDadSte15} thanks to the properties of
    discrete Gaussians.
\end{description}

Many algorithms for solving its relaxed variant, \ACVP{}, have been
proposed. However, they come with caveats.  For example, Dadush, Regev and
Stephens-Davidowitz~\cite{dadush2014closest} give algorithms for this problem,
but only with exponential time precomputations.  Babai~\cite[Theorem
3.1]{Babai86} showed that one can reach an $2^{\frac{n}{2}}$-approximation
factor for \CVP{} in polynomial time.  To the authors'
knowledge,
this has never been improved (while keeping the polynomial-time requirement),
though the approximation factor for \SVP{} has been significantly
reduced~\cite{Schnorr87,C:HanPujSte11,EC:MicWal16}.

We aim at solving the relaxed version of
\CVP{} for relatively large approximation factors, and study the tradeoff
between the quality of the approximation of the solution found and the time
required to actually find it.
In particular, we exhibit a hierarchy of polynomial-time algorithms
solving \ACVP, ranging from Babai's nearest plane algorithm to
an actual \CVP~oracle.

\subsection*{Contributions and summary of the techniques}
In~\cref{sec:nearest_coplane} we introduce our so-called
\algName{Nearest-Colattice} algorithm. Inspired by Babai's algorithm, it shows
that in practice, we can achieve the performance of Kannan's embedding but
with a basis which is \emph{independent} of the target vector.
Denote by $T(\beta)$ (resp.
$T_{\CVP}(\beta)$) the time required to solve $\sqrt{\beta}$-Hermite-\SVP{}
(resp. exactly solve \CVP{}) in rank $\beta$).
Quantitatively, we show that:
\begin{theorem}[Informal]
  Let $\beta>0$ be a positive integer and $B$ be a basis of a lattice $\Lat$
  of rank $n>2\beta$. After precomputations using a time bounded by
  $T(\beta)(n+\log \|B\|)^{\bigO{1}}$, given a target $t\in\Lat_\RR$
  and under a heuristic on the covering radius of random lattice, the
  algorithm \algName{Nearest-Colattice} finds a vector $x\in \Lat$ such that
  \[ \|x-t\| \leq \Theta(\beta)^{\frac{n}{2\beta}}\covol(\Lat)^{\frac{1}{n}}
  \] in time $T_{\CVP}(\beta)(n+\log \|t\|+\log \|B\|)^{\bigO{1}}$.
\end{theorem}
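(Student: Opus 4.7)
My plan is to generalize Babai's nearest plane algorithm by handling $\beta$-dimensional \emph{slabs} (projected sublattices of corank $\beta$) instead of one hyperplane at a time, split into two phases: a one-shot block reduction of the basis, then an iterated \CVP{} descent driven by the precomputed structure.

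For the precomputation, I would apply a block-reduction procedure --- \BKZ{} with blocksize $\beta$, or a slide-reduction variant --- to the input basis $B$. Each inner call invokes an \HSVP{} oracle on a projected rank-$\beta$ lattice, which yields the announced bound $T(\beta)(n + \log\|B\|)^{\bigO{1}}$. This produces a chain
\[
  \Lat = \Lat_0 \supset \Lat_1 \supset \cdots \supset \Lat_k, \qquad k = \lfloor n/\beta \rfloor,
\]
where $\Lat_i$ has corank $i\beta$ and each quotient $\Lat_i/\Lat_{i+1}$ of rank $\beta$ has a covolume controlled by the \BKZ{} root-Hermite factor $\Theta(\beta)^{1/(2\beta)}$. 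I would also precompute, for each quotient, a section that lifts any vector in $\Lat_i/\Lat_{i+1}$ back to a small representative in $\Lat_i$.

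For the online phase, given a target $t$, I set $t_0 = t$ and iterate for $i = 0, \dots, k-1$: project $t_i$ to $\Lat_i/\Lat_{i+1}$, call the exact \CVP{} oracle there to obtain the closest lattice point, lift it via the precomputed section to some $x_i \in \Lat_i$, and update $t_{i+1} = t_i - x_i$. The returned vector is $x = \sum_i x_i \in \Lat$. There are $k$ iterations, each costing $T_{\CVP}(\beta)(n+\log\|t\|+\log\|B\|)^{\bigO{1}}$, matching the claimed runtime. By construction $t - x = t_k$ splits orthogonally into $k$ contributions, one per quotient; the $i$-th is the distance between the image of $t_i$ and the quotient lattice, hence bounded by its covering radius. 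Under the heuristic that the covering radius of a random rank-$\beta$ lattice is at most $\sqrt{\beta}$ times its normalized covolume, this contribution is controlled by $\sqrt{\beta}\,\covol(\Lat_i/\Lat_{i+1})^{1/\beta}$ up to constants. Since the block covolumes telescope to $\covol(\Lat)$ and each individual covolume satisfies the root-Hermite bound, summing squared contributions as a geometric series dominated by the first block yields
\[
  \|t - x\| \leq \Theta(\beta)^{\frac{n}{2\beta}}\covol(\Lat)^{\frac{1}{n}}.
\]

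The main obstacle, I expect, lies in making this telescoping tight. A plain \BKZ{} reduction loses polynomial-in-$n$ slack in the per-block root-Hermite factor, whereas the clean $\Theta(\beta)^{n/(2\beta)}$ exponent seems to demand a stronger reduction notion (slide-reduction or a \DBKZ{}-style self-dual reduction) enforcing a near-geometric Gram--Schmidt profile across all blocks simultaneously. A secondary technical point is controlling the bit-size of each lift $x_i$ so that the dependence in $\log\|t\|$ rather than $\|t\|$ is preserved in the runtime; a preliminary size-reduction of $t$ modulo $\Lat$ should take care of this without affecting the distance analysis.
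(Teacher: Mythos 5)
Your proposal is correct and matches the paper's proof essentially step for step: block the \DBKZ-reduced basis into rank-$\beta$ quotients, run the nearest-colattice descent with an exact \CVP{} call per block, bound the error by the Pythagorean sum of quotient covering radii via the heuristic $\mu \lesssim \sqrt{\beta}\,\covol^{1/\beta}$, and observe the resulting geometric sum is dominated by the top block since $n>2\beta$. The obstacle you flag at the end is resolved in the paper exactly as you anticipate, by using \DBKZ{} together with a (Gaussian-heuristic-based) bound on all block covolumes rather than plain \BKZ.
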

Furthermore, the structure of the algorithms allow time-memory tradeoff and
batch \CVP{} oracle to be used.

We believe that this algorithm has been in the folklore for some time, and
it is somehow hinted in \texttt{ModFalcon}'s security
analysis~\cite[Subsection 4.2]{modfalcon}, but without analysis of
the heuristics introduced.

Our second contribution is an \ACVP{} algorithm, which gives a time-quality
tradeoff similar to the one given by the \textsc{bkz}
algorithm~\cite{Schnorr87,C:HanSte07}, or variants of
it~\cite{STOC:GamNgu08,aggarwal2019slide}.
Note however that the approximation factor is significantly higher than
the corresponding theorems for \ASVP. Written as a reduction, we prove that, for
a $\gamma$-\HSVP{} oracle \oracle{}:

\begin{theorem}[\ACVPP{} oracle from \ASVP{} oracle]
  Let $\Lat$ be a lattice of rank $n$. Then one can solve the
  $(n^\frac32\gamma^3)$-closest vector problem in $\Lat$, using $2n^2$ calls
  to the oracle \oracle{} during precomputation, and polynomial-time
  computations.
\end{theorem}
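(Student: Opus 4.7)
My plan is a two-phase reduction. During the $2n^2$-oracle-call precomputation I produce a well-reduced basis $(b_1,\dots,b_n)$ of $\Lat$ whose Gram--Schmidt profile is nearly flat, and at query time I run Babai's nearest-plane algorithm on the target $t$ with this basis.

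\emph{Preprocessing.} Starting from any basis of $\Lat$, I iteratively call the $\gamma$-HSVP oracle on the various projected sublattices $\pi_i(\Lat)$. Each such call returns a vector of norm at most $\gamma\,\covol(\pi_i(\Lat))^{1/(n-i+1)}$ which, lifted and inserted in position $i$, strictly improves the $i$-th Gram--Schmidt length. Alternating forward (primal) and backward (dual) sweeps converge, by the standard potential argument on $\sum_i (n-i)\log\|b_i^*\|$, in at most $2n^2$ oracle calls. Combining the HSVP guarantees on the primal and dual projected blocks with Minkowski's second theorem, one verifies that the resulting basis satisfies the flatness bound
\[
\frac{\|b_i^*\|}{\|b_j^*\|} \;\le\; n\,\gamma^{3} \qquad \text{for all } i,j\in\{1,\dots,n\}.
\]

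\emph{Query and quality.} Babai's nearest-plane algorithm on $t$ returns $x\in\Lat$ with $\|x-t\|^2 \le \tfrac14 \sum_i \|b_i^*\|^2 \le \tfrac{n}{4}\max_i\|b_i^*\|^2$. The classical Babai argument then lower-bounds the distance: writing $y^*-x = \sum c_i b_i$ for a true closest vector $y^*$, taking $k$ the largest index with $c_k\neq 0$ and projecting onto $\mathrm{span}(b_1,\dots,b_{k-1})^\perp$ gives $\dist(t,\Lat)=\|y^*-t\| \ge \|b_k^*\|/2$. Applying the flatness bound to compare $\max_i\|b_i^*\|$ to this specific $\|b_k^*\|$ yields
\[
\|x-t\|^2 \;\le\; \tfrac{n}{4}(n\gamma^{3})^2 \|b_k^*\|^2 \;\le\; n^{3}\gamma^{6}\,\|y^*-t\|^2,
\]
hence $\|x-t\| \le n^{3/2}\gamma^3\,\dist(t,\Lat)$, which is the claimed approximation factor. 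Since Babai's nearest plane is polynomial given the reduced basis, this handles the online part at polynomial cost.

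\emph{Main obstacle.} The crux is the quantitative analysis of the preprocessing: proving both that $2n^2$ oracle calls really suffice for convergence, and that the flatness ratio at the fixed point is exactly $n\gamma^3$ (with $n$ coming from Minkowski's second theorem applied to a suitable slice, and $\gamma^3$ from the combined primal/dual HSVP reductions). Tracking these constants is delicate because the HSVP oracle only controls one Gram--Schmidt length at a time and only through its projected covolume, so the flatness bound must be derived globally rather than index by index. Once this reduction-quality statement is pinned down, the rest of the proof---Babai's rounding and the distance comparison---is entirely classical.
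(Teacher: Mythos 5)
There is a genuine gap, and it sits exactly at the step you yourself flag as the ``main obstacle'': the preprocessing lemma you rely on is not just unproven, it is false. No basis of a general lattice can satisfy a two-sided flatness bound $\|b_i^*\|/\|b_j^*\|\le n\gamma^3$, whatever reduction you perform. Take the rank-two lattice $\Lat=\ZZ e_1\oplus \epsilon\ZZ e_2$ with $\epsilon$ arbitrarily small: every primitive vector has norm either $\epsilon$ or at least $1$, and since $\|b_1\|\cdot\|b_2^*\|=\covol(\Lat)=\epsilon$, every basis has $\max_{i,j}\|b_i^*\|/\|b_j^*\|\ge 1/\epsilon$. So the fixed point of your primal/dual sweeps cannot have the property you assign to it, and the subsequent Babai analysis (which is fine \emph{conditional} on flatness) has nothing to stand on. What Babai's argument actually needs is the weaker, one-sided condition that the Gram--Schmidt norms do not \emph{decrease} by more than a polynomial factor; but obtaining such a guarantee from a $\gamma$-\HSVP{} oracle (or even an exact \SVP{} oracle --- HKZ-reduced bases only give quasi-polynomial decay bounds) is precisely what the paper states it does not know how to do, and is the reason its proof takes a different route. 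Your ``standard potential argument'' for convergence in $2n^2$ calls is likewise only asserted, but that is secondary to the false flatness claim.

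For comparison, the paper's proof of \cref{thm:main_result} avoids any control of the Gram--Schmidt decay. It first builds (using $n$ oracle calls on successive quotients) a complete filtration whose one-dimensional quotients all have covolume at most $\gamma\lambda_n(\Lat)$, so that nearest-plane decoding along it lands within $\tfrac{\sqrt n\,\gamma}{2}\lambda_n(\Lat)$ of the target (\cref{thm:absolu}). It then computes a short dual vector $c\in\dual\Lat$ with $\|c\|\le\gamma^2\lambda_1(\dual\Lat)$ via \cref{thm:lovasz_reduction} and splits on $\|c\|\,\|e\|$, where $e$ is the true error: if $\|c\|\,\|e\|\ge 1/2$, Banaszczyk's transference (\cref{thm:transference}) gives $\|e\|\ge\lambda_n(\Lat)/(2n\gamma^2)$ and the absolute decoder already achieves factor $n^{3/2}\gamma^3$; if $\|c\|\,\|e\|<1/2$, then $\lfloor\inner{c}{t}\rceil=\inner{c}{v}$ recovers one coordinate of the closest vector exactly, and one recurses on the projection orthogonal to $c$ in rank $n-1$, which is where the $2n^2$ oracle calls come from. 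You would need either this kind of case analysis or a genuinely new reduction-quality theorem to salvage your plan.
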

Babai's algorithm requires that the Gram-Schmidt norms do not decrease by too much in
the reduced basis. While this is true for a \LLL{} reduced basis~\cite{LLL82}, we
do not know a way to guarantee this in the general case. To overcome this
difficulty,
the proof technique goes as follows:
first we show that it is possible to find a vector within distance
$\frac{\sqrt{n}\gamma}{2}\lambda_n(\Lat)$
of the target vector, with the help of a highly-reduced basis.
This is not enough, as the target can be very closed compared to
$\lambda_n(\Lat)$.  We treat this peculiar case by finding a short vector
in the dual lattice and then directly compute the inner product of the close
vectors with our short dual vector.  In the other case, Banaszczyk's
transference theorem~\cite{banaszczyk1993new} guarantees that
$\lambda_n(\Lat)$ is comparable to the distance to the
lattice, so that we can use our first algorithm directly.

\begin{remark}
  Based on a result due to Kannan (see for
  instance~\cite{dubey2011approximating}) that $\sqrt{n}\gamma^2$ \CVP{}
  reduces to $\gamma$-\SVP{}. Combined with the reduction from
  $\gamma^2$-\SVP{} to $\gamma$-\HSVP{} of \cite{lovasz1986algorithmic},
  we get a polynomial time reduction from $\sqrt{n}\gamma^4$-\CVP{} to
  $\gamma$-\HSVP.  Hence, our result is better when $n^{\frac32}\gamma^3$ is
  smaller than $\sqrt{n} \gamma^4$, i.e., when $n < \gamma$.
\end{remark}

\section{Algebraic and computational background}
\label{sec:background}

In this preliminary section, we recall the notions of geometry of numbers used
throughout this paper, the computational problems related to $\SVP$ and
$\CVP$, and a brief presentation of some lattice reduction algorithms
solving these problems.

\subsection*{Notations and conventions}
\subsubsection*{General notations}
The bold capitals $\ZZ$, $\QQ$ and $\RR$ refer as usual to the
ring of integers and respectively the field of rational and real numbers.
Given a real number $x$, the integral roundings
\emph{floor}, \emph{ceil} and \emph{round to the nearest integer} are
denoted respectively by $\lfloor x\rfloor, \lceil x\rceil, \lfloor x
\rceil$. All logarithms are taken in base $2$, unless explicitly
stated otherwise.

\subsubsection*{Computational setting}
The generic complexity model used in this work is the random-access
machine (RAM) model and the computational cost is measured in
operations.

\subsection{Euclidean lattices and their geometric invariants}
\label{sec:euclidean_spaces}

\subsubsection{Lattices}
\label{sec:definitions}

\begin{definition}[Lattice]
  A (real) \emph{lattice}  $\Lat$ is a finitely generated free
  $\ZZ$-module, endowed with a Euclidean norm $\|.\|$ on the real
  vector space $\Lat_\RR = \Lat\otimes_\ZZ \RR$.
\end{definition}

We may omit to write down the norm to refer to a lattice $\Lat$ when any
ambiguity is removed by the context. By definition of a
finitely-generated free module, there exists a finite family $(v_1,
\ldots, v_n) \in \Lat^n$ such that $\Lat = \bigoplus_{i=1}^n v_i \ZZ$,
called a \emph{basis} of $\Lat$. Every basis has the same number of
elements $\rk(\Lat)$, called the rank of the lattice.

\subsubsection{Sublattices, quotient lattice}
\label{sec:sublattices_quotients}

Let $(\Lat, \|\cdot\|)$ be a lattice, and let $\Lat'$ be a submodule of
$\Lat$. Then the restriction of $\|\cdot\|$ to $\Lat'$ endows $\Lat$ with a
lattice structure. The pair $(\Lat', \|\cdot\|)$ is called a \emph{sublattice}
of $\Lat$. In the following of this paper, we restrict ourselves to so-called
\emph{pure sublattices}, that is such that the quotient
$\faktor{\Lat}{\Lat'}$ is torsion-free. In this case, the quotient can be
endowed with a canonical lattice structure by defining: \[
  \|v+\Lat'\|_{\Lat/\Lat'} = \inf_{v'\in\Lat'_\RR} \|v-v'\|_\Lat.
\]
This lattice is isometric to the projection of $\Lat$ orthogonally to
the subspace of $\Lat_\RR$ spanned by $\Lat'$.

\subsubsection{On effective lifting.} Given a coset $v+\Lat'$ of the quotient
$\faktor{\Lat}{\Lat'}$, we might need
to find a representative of this class in $\Lat$. While any element could be
theoretically taken, from an algorithmic point of view, we shall take an
element of norm somewhat small, so that its coefficients remain polynomial in
the input representation of the lattice. An effective solution to do so
consists in using for instance the \emph{Babai's rounding} or \emph{Babai's
  nearest
plane} algorithms. For completeness purpose we recast here the pseudo-code of
such a \algName{Lift} function using the nearest-plane procedure.

\begin{algo}[algotitle={Lift (by Babai's nearest plane)},
  label=alg:lift]
  \sffamily
  \begin{algorithm}[H]
\BlankLine
    \KwIn{A lattice basis $B = (v_1, \ldots, v_k)$ of $\Lat'$ in $\Lat$, a vector
    $t\in\Lat_\RR$.}
    \KwResult{A vector of the class $\tilde{t}+\Lat'\in \Lat$.}
\BlankLine
    Compute the Gram-Schmidt orthogonalization $(v_1^*, \ldots, v_k^*)$ of
    $B$\;
    $s \gets -t$\;
    \For{$i=k$ \Downto $1$}{$s \gets s-\left\lfloor\frac{\inner{s}{v_i^*}}{\|v_i^*\|^2}
    \right\rceil v_i$\;
  }
  \Return $t+s$
\end{algorithm}
\end{algo}

\subsubsection{Orthogonality and algebraic duality}
\label{sec:algebraic_nuality}

The \emph{dual} lattice $\dual{\Lat}$ of a lattice $\Lat$
is defined as the module $\textrm{Hom}(\Lat,\ZZ)$ of integral linear forms,
endowed with the derived norm defined by
\[\|\varphi\| = \inf_{v\in\Lat_\RR\setminus\{0\}}
\frac{|\varphi(v)|}{\|v\|_\Lat}\]
for $\varphi\in\dual{\Lat}$. By Riesz's representation theorem, it is
isometric to:
\[\{ x\in\Lat_\RR ~|~ \inner{x}{v}\in\ZZ,
\forall v\in\Lat\}\] endowed with the dual of $\|\cdot\|_\Lat$.

Let $\Lat'\subset\Lat$ be a sublattice. Define its
\emph{orthogonal} in $\Lat$ to be the sublattice
\[\orth{\Lat'} = \{x \in \dual{\Lat} : \inner{x}{\Lat'} = 0\}\] of
$\dual{\Lat}$. It is isometric to
$\dual{\left(\faktor{\Lat}{\Lat'}\right)}$, and by biduality
$\orth{\dual{\Lat'}}$ shall be identified with $\faktor{\Lat}{\Lat'}$.

\subsubsection{Filtrations}
\label{sec:filtrations}
A filtration (or flag) of a lattice $\Lat$ is an increasing sequence of
submodules of $\Lat$, i.e.\ each submodule is a proper
submodule of the next:
$\{0\}=\Lat_{0}\subset \Lat_{1}\subset \Lat_{2}\subset \cdots \subset
\Lat_{k}=\Lat.$
If we write the $\rk(\Lat_i) = d_i$, then we have:
$ 0=d_{0}<d_{1}<d_{2}<\cdots <d_{k}=\rk(\Lat),$
A filtration is called \emph{complete} if $d_i = i$ for all i.

\subsubsection{Successive minima, covering radius and transference}
\label{sec:minima}

\begin{wrapfigure}{R}{0.3\textwidth}
  \centering
  \begin{tikzpicture}[scale=0.8]
\draw [lightgray] [<->] (0,-1.2) -- (0,3.8);
    \draw [lightgray] [->] (-1.1,0) -- (3.8,0);
\draw [duckBlue, thick,shorten >=0.7 ] [->] (0,0) -- (1,1);
    \draw [duckBlue, thick, shorten >= 0.7] [->] (0,0) -- (-1,1);
    \draw [duckBlue] (0.6,1) node {\footnotesize$u$};
    \draw [duckBlue] (-0.6,1) node {\footnotesize $v$};
\draw [goldOrange, dashed] (2,2) circle [radius=1];
    \draw [goldOrange, dashed] (3,1) circle [radius=1];
    \draw [goldOrange, dashed] (1,3) circle [radius=1];
    \draw [goldOrange, dashed] (3,3) circle [radius=1];
    \draw [goldOrange, fill, opacity=0.1] (2,2) circle [radius=1];
    \draw [goldOrange, fill, opacity=0.1] (3,1) circle [radius=1];
    \draw [goldOrange, fill, opacity=0.1] (1,3) circle [radius=1];
    \draw [goldOrange, fill, opacity=0.1] (3,3) circle [radius=1];
    \draw [lightgray] [-] (3,1) -- (4,1);
    \draw [goldOrange] (3.55,1.2) node {\tiny $\mu(\Lat)$};
\draw [fill,duckBlue,opacity=.4] (-1,3) circle [radius=0.1];
    \draw [fill,duckBlue,opacity=.4] (1,-1) circle [radius=0.1];
    \draw [fill,duckBlue,opacity=.4] (-1,-1) circle [radius=0.1];
    \draw [fill,duckBlue,opacity=.4] (-1,1) circle [radius=0.1];
    \draw [fill,duckBlue,opacity=.4] (3,-1) circle [radius=0.1];
    \draw [fill,duckBlue,opacity=.4] (0,0) circle [radius=0.1];
    \draw [fill,duckBlue,opacity=.4] (0,2) circle [radius=0.1];
    \draw [fill,duckBlue,opacity=.4] (1,1) circle [radius=0.1];
    \draw [fill,duckBlue,opacity=.4] (1,3) circle [radius=0.1];
    \draw [fill,duckBlue,opacity=.4] (3,1) circle [radius=0.1];
    \draw [fill,duckBlue,opacity=.4] (2,0) circle [radius=0.1];
    \draw [fill,duckBlue,opacity=.4] (2,2) circle [radius=0.1];
    \draw [fill,duckBlue,opacity=.4] (3,3) circle [radius=0.1];
  \end{tikzpicture}
  \caption{\label{fig:covering}Covering radius
    $\color{goldOrange}\mu(\Lambda)$ of a two
  dimensional lattice $\Lat$.}
\end{wrapfigure}
Let $\Lat$ be a lattice of rank $n$. By discreteness in $\Lat_\RR$, there
exists a vector of minimal norm in $\Lat$. This parameter is called the
\emph{first minimum} of the lattice and is denoted by $\lambda_1(\Lat)$.  An
equivalent way to define this invariant is to see it as the smallest positive real
$r$ such that the lattice points inside a ball of radius $r$ span a space of
dimension 1. This definition leads to the following generalization, known as
successive minima.
\begin{definition}[Successive minima]
  Let $\Lat$ be a lattice of rank $n$. For $1\leq i\leq n$, define the $i$-th
  minimum of $\Lat$ as
  \[\hspace{-12em}\lambda_i(\Lat) = \inf \{r\in\RR |
  \dim(\Span(\Lat \cap B(0, r))) \geq i\}.\]
\end{definition}

\begin{definition}
  The covering radius a lattice $\Lat$ or rank $n$ is defined as
  \[\mu(\Lat) = \max_{x\in \Lat_\RR} \dist(x,\Lat).\]
\end{definition}
It means that for any vector of the
ambient space $x\in \Lat_\RR$ there exists a lattice point $v \in \Lat$
at distance smaller than $\mu(\Lat)$.

We now recall Banaszczyk's transference theorem,
relating the extremal minima of a lattice and its dual:

\begin{theorem}[Banaszczyk's transference theorem~\cite{banaszczyk1993new}]
  \label{thm:transference}
  For any lattice $\Lambda$ of dimension $n$, we have \[\hspace{-12em} 1\leq
  2\lambda_1(\dual{\Lambda})\mu(\Lambda)\leq n,\]
  implying,
  \[\hspace{-12em}1\leq \lambda_1(\dual{\Lambda})\lambda_n(\Lambda)\leq n. \]
\end{theorem}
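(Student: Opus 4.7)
The statement is Banaszczyk's classical transference theorem, and I would split the proof into the primary inequality involving the covering radius and the subsequent consequence for $\lambda_n$.

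The lower bound $1 \leq 2\lambda_1(\dual{\Lambda})\mu(\Lambda)$ is elementary. Pick a shortest nonzero dual vector $w \in \dual{\Lambda}$, so that $\|w\|=\lambda_1(\dual{\Lambda})$, and set $t = w/(2\|w\|^2)\in\Lambda_\RR$, which satisfies $\inner{t}{w}=\tfrac12$. For every lattice point $v\in\Lambda$, integrality of $\inner{v}{w}$ forces $|\inner{t-v}{w}|\geq \tfrac12$, and Cauchy--Schwarz then yields $\|t-v\|\geq 1/(2\|w\|)$. Taking the infimum over $v$ gives $\mu(\Lambda)\geq\dist(t,\Lambda)\geq 1/(2\lambda_1(\dual{\Lambda}))$, as desired.

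The upper bound $2\lambda_1(\dual{\Lambda})\mu(\Lambda)\leq n$ is the deep content, and here I would follow Banaszczyk's Fourier-analytic route. Define the Gaussian mass $\rho_s(\Lambda)=\sum_{v\in\Lambda}e^{-\pi\|v\|^2/s^2}$ and, for $t\in\Lambda_\RR$, the shifted mass $\rho_s(\Lambda-t)$. Poisson summation converts the ratio $\rho_s(\Lambda-t)/\rho_s(\Lambda)$ into a cosine series over $\dual{\Lambda}$ with weights $e^{-\pi s^2\|w\|^2}$. The heart of the argument is Banaszczyk's concentration lemma: for any lattice $L$ and radius $r\geq \sqrt{n/(2\pi)}$, the Gaussian mass of $L$ outside the ball of radius $r\sqrt{n}$ is at most $(\mbox{small})^n$ times $\rho_1(L)$. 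Applying this to $\dual{\Lambda}$ with parameter $s$ slightly below $\sqrt{n}/(2\lambda_1(\dual\Lambda))$, all dual terms with $w\neq 0$ are crushed, so the shifted Gaussian mass of $\Lambda$ remains strictly positive near every $t$, forcing the existence of a lattice vector at distance $\leq n/(2\lambda_1(\dual\Lambda))$. The main obstacle is precisely this concentration inequality, which rests on a clever convexity/log-concavity argument on the Gaussian density and a careful rescaling trick; everything else is bookkeeping around Poisson summation.

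For the second chain $1\leq \lambda_1(\dual{\Lambda})\lambda_n(\Lambda)\leq n$, the lower bound comes from the same pigeonhole: pick $n$ linearly independent lattice vectors $v_1,\ldots,v_n$ of norm at most $\lambda_n(\Lambda)$, at least one has $\inner{v_i}{w}\neq 0$ for the shortest dual $w$, and integrality plus Cauchy--Schwarz give $\lambda_n(\Lambda)\lambda_1(\dual\Lambda)\geq 1$. The upper bound is deduced from the covering-radius version via Jarn\'ik's inequality $\lambda_n(\Lambda)\leq 2\mu(\Lambda)$: at a point $t$ realizing $\mu$, the set of closest lattice points lies on a sphere of radius $\mu$ around $t$, and by Carath\'eodory one can select $n+1$ of them whose convex hull contains $t$, so their pairwise differences span $\Lambda_\RR$, are lattice vectors, and have norm at most $2\mu$. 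Combining with the upper half of the first chain closes the argument.
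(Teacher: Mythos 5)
The paper does not prove this statement at all: Banaszczyk's theorem is quoted as a black-box citation, so there is no internal proof to compare yours against. Judged on its own terms, your outline follows the standard route, and the elementary pieces are correct and complete: the lower bound $2\lambda_1(\dual{\Lambda})\mu(\Lambda)\geq 1$ via $t=w/(2\|w\|^2)$, integrality of $\inner{v}{w}$ and Cauchy--Schwarz; the lower bound $\lambda_1(\dual{\Lambda})\lambda_n(\Lambda)\geq 1$ by the same pigeonhole on $n$ linearly independent vectors; and the reduction of the second upper bound to the first via $\lambda_n(\Lambda)\leq 2\mu(\Lambda)$.

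Two gaps remain if this is to stand as a proof rather than a summary. First, the entire content of $2\lambda_1(\dual{\Lambda})\mu(\Lambda)\leq n$ is the Gaussian concentration lemma, which you invoke but do not prove; as written this half is an outline of Banaszczyk's argument, not an argument (and the quantitative statement you give of the lemma is garbled: the mass bound concerns the complement of a ball of radius $c\sqrt{n}$ with $c\geq 1/\sqrt{2\pi}$ for the Gaussian $e^{-\pi\|x\|^2}$, and the parameter choice has to be tuned to obtain exactly $n/2$ rather than $O(n)$). Second, your derivation of $\lambda_n(\Lambda)\leq 2\mu(\Lambda)$ does not quite work as stated: Carath\'eodory only provides at most $n+1$ closest points whose convex hull contains the deep hole $t$, and these may lie in a proper affine subspace, in which case their pairwise differences do not span $\Lambda_\RR$. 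What you need is that the set $S$ of lattice points at distance exactly $\mu(\Lambda)$ from $t$ affinely spans the whole space: if it did not, you could perturb $t$ orthogonally to the affine hull of $S$, strictly increasing the distance to every point of $S$ while, by discreteness, all other lattice points remain at distance greater than $\mu(\Lambda)$, contradicting the maximality defining $\mu$. (The same perturbation argument is what justifies $t\in\mathrm{conv}(S)$ in the first place, which you leave implicit.) Once $S$ is known to affinely span, pick $n+1$ affinely independent points of $S$ and take differences from one of them; Carath\'eodory is then unnecessary. These repairs are standard, but they are needed for the chain $1\leq\lambda_1(\dual{\Lambda})\lambda_n(\Lambda)\leq n$ to follow as you claim.
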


\subsection{Computational problems in geometry of numbers}
\subsubsection{The shortest vector problem}
\label{sec:shortest_vector_problem}
In this section, we introduce formally the \SVP~problem and its
variants and discuss their computational hardness.

\begin{definition}[$\gamma$-\SVP]
  Let $\gamma = \gamma(n) \geq 1$. The $\gamma$-Shortest Vector
  Problem ($\gamma$-\SVP) is defined as follows.

  \begin{description}
    \item[Input] A basis $(v_1, \ldots, v_n)$ of a lattice $\Lat$ and
      a target vector $t \in\Lat_\RR$.
    \item[Output] A lattice vector $v\in\Lat\setminus\{0\}$ satisfying
      $\|v\|\leq \gamma \lambda_1(\Lat)$.
  \end{description}
\end{definition}

In the case where $\gamma=1$, the corresponding problem is simply called \SVP.
\begin{theorem}[Haviv and Regev~\cite{STOC:HavReg07}]
    \ASVP~is \NP-hard under randomized reductions for every constant
  approximation factor.
\end{theorem}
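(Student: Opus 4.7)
The plan is to establish the result via a randomized reduction from a suitably hard gap problem, followed by a gap-amplification step using tensor products of lattices. The natural starting point is the gap version of the Closest Vector Problem: decide whether a given lattice $\Lat$ contains a point within distance $d$ of a target $t$, or whether every lattice point lies at distance at least $\gamma d$. By PCP-style machinery (Arora--Babai--Stern and subsequent strengthenings), this gap-\CVP{} is \NP-hard even for any constant factor, so the actual difficulty is to carry hardness over to \SVP{}.

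The core of the reduction would be a classical Ajtai-style randomized embedding of a gap-\CVP{} instance $(\Lat, t, d)$ into a gap-\SVP{} instance: one forms an auxiliary lattice that combines a basis of $\Lat$ with a shifted copy of the target (together with a suitably calibrated orthogonal component), so that short vectors in the new lattice are in bijection with close vectors in $\Lat$. The randomness is introduced via a Gaussian-distributed coset representative; the role of this randomization is to ensure, with high probability, that there is no ``parasitic'' short vector arising purely from lattice combinations that do not encode a close vector of $t$. This embedding typically yields hardness of \ASVP{} only for a fixed modest constant factor (something like $\sqrt{2}$ or so).

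To amplify the factor to an \emph{arbitrary} constant $\gamma$, the plan is to pass to a $k$-fold tensor power $\Lat^{\otimes k}$. A short vector $v \in \Lat$ gives a vector $v^{\otimes k}$ in $\Lat^{\otimes k}$ of norm $\|v\|^k$; if one could show $\lambda_1(\Lat^{\otimes k}) = \lambda_1(\Lat)^k$, then a gap of $\alpha$ in $\Lat$ would become a gap of $\alpha^k$ in $\Lat^{\otimes k}$, yielding any desired constant factor by taking $k$ large enough (with the reduction remaining polynomial in $k$).

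The hard part, and the obstacle where I would expect to spend essentially all of the technical effort, is exactly this tensorization bound. In general $\lambda_1(\Lat_1 \otimes \Lat_2)$ can be strictly smaller than $\lambda_1(\Lat_1)\lambda_1(\Lat_2)$: tensors of rank larger than one can have unexpectedly small Euclidean norm. The way forward is to exploit the specific arithmetic structure of the embedding lattice produced above (a code-based or BCH-like construction, in the style of Khot and subsequent work) so that any element of $\Lat^{\otimes k}$ can be decomposed and its norm lower-bounded via its matrix unfoldings and their singular values. Concretely, I would first prove a ``tensor norm lower bound'' lemma for the specific family of lattices used in the \NP-hardness reduction, and then conclude by combining this lemma with the randomized embedding and the standard amplification $k = \Theta(\log \gamma)$.
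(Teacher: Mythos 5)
The paper does not prove this statement: it is an external result, quoted from Haviv and Regev (building on Khot), so the only meaningful comparison is with their argument. Your outline does track that strategy at the highest level --- establish \NP-hardness of \ASVP{} for some fixed constant under a randomized reduction, then amplify by taking tensor powers, the recognized obstruction being that $\lambda_1(\Lat_1\otimes\Lat_2)$ can be strictly smaller than $\lambda_1(\Lat_1)\lambda_1(\Lat_2)$. But as a proof the proposal has a genuine gap exactly where the theorem lives: the ``tensor norm lower bound lemma'' you defer to \emph{is} the content of the Haviv--Regev paper, and your sketch gives no indication of how to prove it beyond naming matrix unfoldings and singular values, which by themselves do not rule out low-rank ``parasitic'' tensors of small norm. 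Saying the obstacle is hard and will absorb all the effort is a correct diagnosis, not an argument; with that lemma missing, the amplification step, and hence the arbitrary-constant claim, is unsupported.

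Two further points in the surrounding scaffolding are off and would need repair even as a roadmap. First, the base constant-factor hardness is not obtained by an Ajtai-style randomized embedding of gap-\CVP{} with a Gaussian coset representative; Khot's construction reduces from a covering-type problem through a BCH-code based lattice together with a randomly sampled intermediate lattice, and the randomness serves a different purpose than killing one parasitic vector. Second, one does not (and cannot in general) prove $\lambda_1(\Lat^{\otimes k})=\lambda_1(\Lat)^k$ even for these instances; Khot himself had to introduce an \emph{augmented} tensor product precisely because plain tensoring was not known to preserve the gap, and the Haviv--Regev contribution is the one-sided statement that NO-instances of Khot's construction keep a large minimum distance under ordinary tensoring, proved via the specific code structure rather than via a generic multiplicativity claim. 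So your plan points at the right literature, but the step that makes the theorem true is asserted, not proven.
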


A variant of the problem consists of finding vectors within
Hermite-like inequalities.

\begin{definition}[$\gamma$-\HSVP]
  Let $\gamma = \gamma(n) \geq 1$. The $\gamma$-Hermite Shortest Vector
  Problem ($\gamma$-\HSVP) is defined as follows.
  \begin{description}
    \item[Input] A basis $(v_1, \ldots, v_n)$ of a lattice $\Lat$.
    \item[Output] A lattice vector $v\in\Lat\setminus\{0\}$ satisfying
      $\|v\|\leq \gamma \covol(\Lat)^{\frac1n}$.
  \end{description}
\end{definition}

There exists a simple polynomial-time dimension-preserving reduction between
these two problems, as stated by Lov\'asz in~\cite[1.2.20]{lovasz1986algorithmic}:
\begin{theorem}
  \label{thm:lovasz_reduction}
  One can solve $\gamma^2$-\SVP{} using $2n$ calls to a
  $\gamma$-\HSVP{} oracle and polynomial time.
\end{theorem}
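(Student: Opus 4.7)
The obstacle the reduction needs to overcome is the gap between $\covol(\Lat)^{1/n}$, which is what an $\HSVP$ oracle controls, and $\lambda_1(\Lat)$, which can be significantly smaller when the lattice is skewed. The plan is to force one of two situations at each step: either the gap is actually at most $\gamma$ (so a single primal call to the oracle already yields the desired $\gamma^2$-$\SVP$ approximation), or the lattice admits a dense hyperplane into which we can recurse without losing any shortest vector. The dual $\HSVP$ oracle is what allows us to decide between the two cases uniformly.

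Concretely, given $\Lat$ of rank $n$, I would call the oracle twice: once on $\Lat$ to obtain $v\in\Lat$ with $\|v\|\leq\gamma\covol(\Lat)^{1/n}$, and once on the dual $\dual{\Lat}$ to obtain $w\in\dual{\Lat}$ with $\|w\|\leq\gamma\covol(\dual{\Lat})^{1/n}=\gamma/\covol(\Lat)^{1/n}$. Up to dividing $w$ by its content, one may assume $w$ is primitive in $\dual{\Lat}$, so that $\Lat' = \Lat \cap \orth{\langle w\rangle}$ is a pure sublattice of rank $n-1$, whose basis can be computed in polynomial time from a basis of $\Lat$. One then recursively solves $\gamma^2$-$\SVP$ in $\Lat'$ to obtain a vector $v'$, and returns the shorter of $v$ and $v'$.

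The correctness hinges on a case split on a true shortest vector $v^\star\in\Lat$. If $v^\star \in w^\perp$, then $v^\star\in\Lat'$, so $\lambda_1(\Lat')=\lambda_1(\Lat)$, and the recursion guarantees $\|v'\|\leq\gamma^2\lambda_1(\Lat)$. Otherwise $\langle w,v^\star\rangle$ is a nonzero integer, whence
\[
1\leq|\langle w,v^\star\rangle|\leq \|w\|\,\|v^\star\| = \|w\|\,\lambda_1(\Lat),
\]
so $\lambda_1(\Lat)\geq 1/\|w\|\geq \covol(\Lat)^{1/n}/\gamma$, and the primal call directly yields $\|v\|\leq\gamma\covol(\Lat)^{1/n}\leq\gamma^2\lambda_1(\Lat)$. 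In both situations the returned vector satisfies the required bound. Writing $T(n)$ for the number of oracle calls, we have $T(n)\leq T(n-1)+2$ with $T(1)=0$, giving $T(n)\leq 2n$.

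The main obstacle in turning this sketch into a full proof is ensuring that each reduction step is genuinely polynomial in the bit-length of the input: one must argue that $w$ can be taken primitive without extra oracle calls, that the hyperplane sublattice $\Lat\cap\orth{\langle w\rangle}$ admits an explicit basis computable from a basis of $\Lat$ and the coordinates of $w$, and that lifting vectors from $\Lat'$ back to $\Lat$ does not blow up bit-sizes. Each of these is routine but needs to be handled carefully so that the total polynomial overhead promised in the statement is actually met. The approximation analysis itself, by contrast, is forced by the dichotomy above and requires no further geometric input beyond the primitivity of $w$ in $\dual{\Lat}$.
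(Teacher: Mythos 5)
Your proposal is correct, and it is essentially the classical primal--dual recursion of Lov\'asz that the paper invokes without proof (it only cites \cite[1.2.20]{lovasz1986algorithmic}): call the oracle on $\Lat$ and on $\dual{\Lat}$, note $\|v\|\,\|w\|\leq\gamma^2$, and either a shortest vector pairs nontrivially with $w$ (so the primal answer is a $\gamma^2$-approximation) or it lies in the rank-$(n-1)$ kernel sublattice into which you recurse, giving $2$ calls per rank and $2n$ in total. No substantive gap; the routine polynomial-time bookkeeping you flag (primitivity of $w$, computing a basis of $\Lat\cap\orth{\langle w\rangle}$, size bounds) is exactly what the standard argument handles.
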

This can be slightly improved in case the \HSVP{} oracle is built from a
\HSVP{} oracle in lower dimension~\cite{aggarwal2019slide}.

\subsubsection{An oracle for $\gamma$-\HSVP}

We note $T(\beta)$ a function such that we can solve
$\bigO{\sqrt{\beta}}$-\HSVP{} in time at most $T(\beta)$ times the input size.
We have the following bounds on $T$, depending on if we are looking at an
algorithm which is:
\begin{description}
  \item[Deterministic] $T(\beta)=(4+\littleO{1})^{\beta/2}$, proven by
Micciancio and Voulgaris~in\cite{SODA:MicVou10};
  \item[Randomized] $T(\beta)=(4/3+\littleO{1})^{\beta/2}$~, introduced
    by Wei, Liu and Wang in \cite{RSA:WeiLiuWan15};
  \item[Heuristic] $T(\beta) = (3/2+\littleO{1})^{\beta/2}$~in \cite{SODA:BDGL16} by Becker,
    Ducas, Gama, Laarhoven.
\end{description}
There also exists variants for quantum
computers~\cite{laarhoven2015finding}, and time-memory tradeoffs, such
as~\cite{PKC:HerKirLaa18}.
By providing a back-and-forth strategy coupled with enumeration in the
dual lattice, the \emph{self dual block Korkine-Zolotarev} (\DBKZ) algorithm
provides an algorithm better than the famous
\textsc{bkz} algorithm.

\begin{theorem}[Micciancio and Walter~\cite{EC:MicWal16}] \label{thm:BKZ}
  There exists an algorithm ouputting a vector $v$ of a lattice $\Lat$
  satisfying:
  \[\|v\| \leq \beta^{\frac{n-1}{2(\beta-1)}} \cdot
  \covol(\Lat)^{\frac{1}{n}}.\]
  Such a bound can be achieved in time
  $(n+\log \|B\|)^{\bigO{1}}T(\beta)$,
  where $B$ is the integer input basis representing $\Lat$.
\end{theorem}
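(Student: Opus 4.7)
The plan is to construct an iterative block-reduction algorithm in the spirit of \BKZ{} but symmetric between primal and dual, and to argue convergence through a potential function on the Gram-Schmidt log-profile.

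First I would set up the data: given a basis $B=(b_1,\ldots,b_n)$ of $\Lat$, track the Gram-Schmidt logarithms $\ell_i = \log\|b_i^*\|$, whose sum is fixed and equal to $\log\covol(\Lat)$. Define two elementary moves parametrized by a block size $\beta$. A forward move at index $i$ calls the $\bigO{\sqrt\beta}$-\HSVP{} oracle on the projected sublattice generated by $\pi_i(b_i),\ldots,\pi_i(b_{i+\beta-1})$, lifts the short output via \algName{Lift} and inserts it into the basis at position $i$, re-running an \LLL{}-style cleanup to keep the basis size polynomial. A backward move is the same operation carried out on the dual lattice: dualize the projected block, apply the oracle there, and pull back. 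One round of the algorithm performs forward moves for $i=1,\ldots,n-\beta+1$ followed by backward moves in the reverse direction.

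The core of the analysis is the potential $\Phi(B) = \sum_{i=1}^{n-1} i(n-i)\,\ell_i$ (or the equivalent telescoped quantity used by Micciancio--Walter). Each application of the $\bigO{\sqrt\beta}$-\HSVP{} oracle on a window guarantees
\[
\ell_i \;\leq\; \frac{1}{\beta}\sum_{j=i}^{i+\beta-1}\ell_j + \frac{\log\beta}{2(\beta-1)},
\]
and substituting this into $\Phi$ shows that a full round either decreases $\Phi$ by a definite amount or leaves the profile essentially stationary up to the above slack. Combined with an a priori upper bound on $\Phi$ in terms of $\log\|B\|$, polynomially many rounds suffice, giving the $(n+\log\|B\|)^{\bigO{1}}T(\beta)$ total running time.

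The hardest part will be the fixed-point analysis, i.e.~characterizing what a round-invariant profile must look like. Writing down the averaging inequality together with the dual averaging inequality forces any stationary profile to be affine with common ratio $\beta^{-1/(\beta-1)}$, so that $\ell_1 \leq \frac{n-1}{2(\beta-1)}\log\beta + \frac{1}{n}\log\covol(\Lat)$, which exponentiates to the stated Hermite bound $\|v\|\leq \beta^{\frac{n-1}{2(\beta-1)}}\covol(\Lat)^{1/n}$. The delicate points are controlling the slack introduced by the imperfect \HSVP{} oracle (which shifts the slope only by a constant) and maintaining bit-size of the basis throughout, which is handled by an \LLL{} call between moves so that the quoted dependence in $\log\|B\|$ stays polynomial.
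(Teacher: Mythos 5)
The paper does not actually reprove this statement: its proof is a one-line citation of Theorem~1 of Micciancio--Walter, with the Hermite constant $\gamma_\beta$ replaced by an $\bigO{\beta}$ upper bound. Your proposal is in essence a sketch of that cited proof (self-dual block reduction plus a fixed-point analysis of the Gram--Schmidt log-profile), so the route is the same; but as written it contains two concrete problems.

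First, the per-block inequality is misstated. An $\bigO{\sqrt\beta}$-\HSVP{} call on the projected block, whose covolume is $\prod_{j=i}^{i+\beta-1}\|b_j^*\|$, yields $\ell_i \le \frac{1}{\beta}\sum_{j=i}^{i+\beta-1}\ell_j + \frac12\log\beta + \bigO{1}$; the slack $\frac{\log\beta}{2(\beta-1)}$ you wrote is smaller by a factor of roughly $\beta$ and is not guaranteed by any oracle --- the $(\beta-1)$ denominator only appears in the slope of the stationary profile after the fixed-point computation (indeed, plugging your inequality into your own fixed-point argument would give the much stronger bound $\beta^{\frac{n-1}{2(\beta-1)^2}}$, inconsistent with the statement you are proving). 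Second, the convergence claim is asserted rather than proven: showing that $\sum_i i(n-i)\ell_i$ drops by a definite amount per tour, or that any round-stationary profile must be geometric with ratio $\beta^{-1/(\beta-1)}$, is exactly the delicate part. Micciancio and Walter do not argue via such a potential; they analyze the affine dynamical system on the partial log-volumes after each forward/backward tour and show it contracts toward its fixed point, so that after polynomially many tours the first vector satisfies the bound up to a factor absorbed into the $\bigO{\beta}$ constant --- the exact fixed point is never reached. Without that contraction estimate (or a genuinely proven potential decrease with a quantitative tour count), the running-time claim $(n+\log\|B\|)^{\bigO{1}}T(\beta)$ is unsupported. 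Since the paper simply invokes the published theorem, the cleanest fix is to cite it; a self-contained proof requires supplying the dynamical-system (or slide-reduction-style) convergence analysis and correcting the per-block Hermite inequality.
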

\begin{proof}
  The bound we get is a direct consequence of~\cite[Theorem~1]{EC:MicWal16}.
  We only replaced the \emph{Hermite constant} $\gamma_\beta$ by an upper
  bound in $\bigO{\beta}$.
\end{proof}

A stronger variant of this estimate is heuristically true, at least for
``random'' lattices, as it is suggested by the Gaussian Heuristic
in~\cite[Corollary 2]{EC:MicWal16}. Under this assumption, one can bound
not only the length of the first vector but also the gap between the
covolumes of the filtration induced by the outputted basis.

\begin{theorem} \label{thm:HBKZ}
  There exists an algorithm ouputting a complete filtration of a lattice
	$\Lat$ satisfying: \[ \covol(\faktor{\Lambda_i}{\Lambda_{i-1}}) \approx
  \Theta(\beta)^{\frac{n+1-2i}{2(\beta-1)}} \covol(\Lat)^{\frac{1}{n}} \] Such
  a bound can be achieved in time $(n+\log \|B\|)^{\bigO{1}}T(\beta)$, where $B$
  is the integer-valued input basis.
  Further, we have:
  \[ \Theta(\sqrt{\beta})
    \covol^\frac1\beta{\left(\faktor{\Lambda_n}{\Lambda_{n-\beta}}\right)}
\approx \covol{\left(\faktor{\Lambda_{n-\beta+1}}{\Lambda_{n-\beta}}\right)}. \]
\end{theorem}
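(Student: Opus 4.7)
The plan is to run the \DBKZ{} algorithm from Theorem~\ref{thm:BKZ} on the input basis and read off the filtration directly from the Gram--Schmidt orthogonalization $(b_1^*,\dots,b_n^*)$ of the reduced basis $(b_1,\dots,b_n)$. Setting $\Lambda_i = \bigoplus_{j\le i} b_j \ZZ$ yields a complete filtration with $\covol(\faktor{\Lambda_i}{\Lambda_{i-1}}) = \|b_i^*\|$, so both claims reduce to controlling the Gram--Schmidt profile of a \DBKZ-reduced basis.

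The key input is that in such a basis, each rank-$\beta$ projected block $\faktor{\Lambda_{i+\beta-1}}{\Lambda_{i-1}}$ is essentially Hermite--Korkine--Zolotarev reduced, so its first minimum is realized by $b_i^*$. Applying the Gaussian Heuristic to this block, treated as a ``random'' lattice of rank $\beta$, predicts
\[
\|b_i^*\| \approx \Theta(\sqrt{\beta}) \cdot \left( \prod_{j=i}^{i+\beta-1}\|b_j^*\| \right)^{1/\beta}.
\]
Setting $u_i = \log \|b_i^*\|$ and $c = \log\Theta(\sqrt{\beta})$, this converts into the recurrence $(\beta-1)u_i \approx \beta c + \sum_{j=i+1}^{i+\beta-1} u_j$, whose solutions form arithmetic progressions with common difference $-2c/(\beta-1)$. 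Imposing the volume constraint $\sum_i u_i = \log \covol(\Lat)$ pins down the intercept and gives
\[
\|b_i^*\| \approx \Theta(\sqrt{\beta})^{\frac{n+1-2i}{\beta-1}} \covol(\Lat)^{1/n} = \Theta(\beta)^{\frac{n+1-2i}{2(\beta-1)}} \covol(\Lat)^{1/n},
\]
which is the first estimate. The second estimate is the very same Gaussian heuristic instantiated on the last block: $\faktor{\Lambda_n}{\Lambda_{n-\beta}}$ is a rank-$\beta$ projected lattice whose first minimum equals $\covol(\faktor{\Lambda_{n-\beta+1}}{\Lambda_{n-\beta}})$, and the heuristic predicts this length as $\Theta(\sqrt{\beta})$ times the $\beta$-th root of the block covolume. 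The running time is inherited verbatim from Theorem~\ref{thm:BKZ}.

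The main obstacle, and the reason this statement carries only an ``$\approx$'' rather than an inequality, is that the Gaussian Heuristic on the local projected blocks of a \DBKZ-reduced basis is not a theorem: these blocks are neither independent nor uniformly random, and the heuristic is known to deteriorate at the extremities of the profile. The estimate should therefore be read as a tight empirically-confirmed prediction, consistent with \cite[Corollary 2]{EC:MicWal16}, rather than a proven bound; everything in the argument above is rigorous once this single heuristic step is granted.
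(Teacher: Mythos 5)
Your derivation is correct and is essentially the paper's own justification made explicit: the paper gives no formal proof of this statement, merely asserting it as a heuristic consequence of the Gaussian Heuristic via \cite[Corollary 2]{EC:MicWal16}, and your block-wise Gaussian-Heuristic recurrence on the \DBKZ{} profile (with the volume normalization fixing the intercept, and the last block giving the second estimate) is precisely that reasoning, including the honest caveat that the step is heuristic rather than proven.
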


\subsection{The closest vector problem}
\label{sec:closest_vector_problem}
In this section we introduce formally the \CVP~problem and its
variants and discuss their computational hardness.

\begin{definition}[$\gamma$-\CVP]
  Let $\gamma = \gamma(n) \geq 1$. The $\gamma$-Closest Vector
  Problem ($\gamma$-\CVP) is defined as follows.

  \begin{description}
    \item[Input] A basis $(v_1, \ldots, v_n)$ of a lattice $\Lat$ and
      a target vector $t \in\Lat\otimes\RR$.
    \item[Output] A lattice vector $v\in\Lat$ satisfying
      $\|x-t\|\leq \gamma \min_{v\in\Lat} \|v-t\|$.
  \end{description}
\end{definition}

In the case where $\gamma=1$, the corresponding problem is called \CVP.

\begin{theorem}[Dinur, Kindler and Shafra~\cite{dinur1998approximating}]
  $n^{\frac{c}{\log \log n}}$-\ACVP~is \NP-hard for any $c>0$.
\end{theorem}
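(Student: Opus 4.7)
The plan is to establish hardness via a gap-preserving reduction from a PCP-based constraint satisfaction problem to $\ACVP$. The starting point is an instance of Label Cover, obtained from the PCP theorem and sharpened by Raz's parallel repetition, which is already $\NP$-hard to approximate within a YES/NO soundness ratio of $(1, 1-\Omega(1))$ at polynomial input size. The goal is to convert this constant combinatorial gap into a multiplicative gap of $n^{c/\log\log n}$ on lattice distances.

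First I would revisit the classical lattice reduction of Arora--Babai--Stern--Sweedyk. From a system of linear (or near-linear) constraints over a variable set, one builds a basis of a lattice $\Lat$ and a target vector $t$ through a code-theoretic encoding, typically using a BCH or Reed--Solomon-based code. The construction is arranged so that a YES instance produces a lattice vector within a fixed distance $d$ of $t$, while any vector associated to a NO instance lies at distance at least $d\sqrt{1+\delta}$, where $\delta$ is controlled by the minimum distance of the underlying code. This already yields a constant-factor inapproximability for $\ACVP$ in polynomial dimension.

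To amplify the gap from a constant up to $n^{c/\log\log n}$, I would apply an iterated tensor-product construction of Dumer--Micciancio--Sudan, refined as in the Dinur--Kindler--Safra argument. Tensoring the hard instance $(\Lat, t)$ with itself $k$ times multiplies the gap to the $k$-th power, while raising the ambient dimension from $n$ to $n^{k}$. Choosing $k = \Theta(\log\log n / \log\log\log n)$ and re-parameterising in the new dimension yields, after a direct computation, exactly the announced $n^{c/\log\log n}$ inapproximability ratio.

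The main obstacle is the tensoring step itself: one must rule out the existence of ``parasitic'' short vectors in $\Lat^{\otimes k}$ which do not arise as tensors of short vectors in the factors and could otherwise shortcut the distance to $t^{\otimes k}$ in the NO case. Controlling them demands tight bounds on successive minima and covering radii of tensor products of lattices, which are typically obtained through Banaszczyk-style discrete Gaussian mass arguments as in~\cite{banaszczyk1993new}, combined with the code-theoretic geometry of the base instance. The rest of the proof, consisting of tuning the code parameters and carrying the PCP bookkeeping through the composition, is comparatively routine once the tensor analysis is established.
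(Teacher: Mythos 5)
This theorem is only quoted in the paper as background, with a citation to Dinur, Kindler and Safra~\cite{dinur1998approximating}; the paper contains no proof to compare against, so your sketch has to stand on its own merits, and it does not: the amplification step is where it breaks. Starting from an ABSS-style base instance with a \emph{constant} gap $g$ and tensoring $k$ times gives gap $g^{k}$ in dimension $N=n^{k}$, i.e.\ a gap of $N^{\Theta(1/\log n)}$ as a function of the final dimension $N$. To reach $N^{c/\log\log N}$ you would need $k\approx \log N/\log\log N$, which together with $\log N=k\log n$ forces $\log\log N\approx\log n$, i.e.\ $N\approx 2^{n}$: the reduction is exponential-size, so it proves nothing about \NP-hardness. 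Your concrete choice $k=\Theta(\log\log n/\log\log\log n)$ only yields a gap $g^{k}=(\log n)^{o(1)}$, which is sub-polylogarithmic and exponentially far from the target factor; the ``direct computation'' you invoke does not go through. This is precisely why tensor-product amplification stops at the Arora--Babai--Stern--Sweedyk regime ($2^{\log^{1-\epsilon}n}$ factors, and only under quasi-polynomial-time assumptions), and why the almost-polynomial factor $n^{c/\log\log n}$ required a genuinely new mechanism.

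The actual argument of~\cite{dinur1998approximating} avoids lattice tensoring altogether: it introduces a CSP variant (SSAT, with ``super-assignments'' of non-trivial norm), proves it \NP-hard to approximate within $n^{c/\log\log n}$ by recursive PCP composition with low-degree/consistency tests, and then gives a single polynomial, gap-preserving reduction from SSAT to \CVP, so that all the amplification happens on the PCP side at polynomial instance size. Two smaller points: the ``parasitic short vectors in $\Lat^{\otimes k}$'' you worry about are the known obstruction for \SVP{} tensoring (the ABSS \CVP{} instances have a one-sided structure making the NO case behave well under tensoring, handled combinatorially, not via Banaszczyk-type Gaussian mass bounds); and Dumer--Micciancio--Sudan concerns the minimum-distance problem for codes, not the amplification you need here. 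As it stands, the proposal would at best recover a weaker, conditional hardness statement, not the theorem as stated.
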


We let $T_{\CVP}(\beta)$ be such that we can solve \CVP{} in dimension $\beta$
in running time bounded by $T_{\CVP}(\beta)$ times the size of the input.
Hanrot and Stehl\'e proved $\beta^{\beta/2+\littleO{\beta}}$ with polynomial
memory~\cite{C:HanSte07}.
Sieves can provably reach $(2+\littleO{1})^{\beta}$ with exponential
memory~\cite{FOCS:AggDadSte15}.
More importantly for this paper, heuristic sieves can reach
$(4/3+\littleO{1})^{\beta/2}$ for solving an entire batch of $2^{0.058\beta}$
instances~\cite{sieve20}.

\section{The nearest colattice algorithm}
\label{sec:nearest_coplane}

We aim
at solving the $\gamma-$\ACVP{} by recursively exploiting
the datum of a filtration \[ \Lat_0 \subset \Lat_1 \subset \cdots \subset
\Lat_k = \Lat \]
\emph{via} recursive approximations. The central object used during
this reduction is the \emph{nearest colattice} relative to
a target vector.

In this section, and the next one, we assume that the size of the bases
is always small, essentially as small as the input basis.
This is classic, and can be easily proven.

\subsection{Nearest colattice to a vector}

\begin{definition}
  \label{def:nearest_colattice}
  Let $0\rightarrow
  \Lat'\rightarrow\Lat\rightarrow\faktor{\Lat}{\Lat'}\rightarrow 0$ be a
  short exact sequence of lattices, and
  set $t\in \Lat_\RR$ a target vector. A nearest
  $\Lat'$-colattice to $t$ is a coset $\bar{v} = v+\Lat'\in
  \faktor{\Lat}{\Lat'}$ which is the closest to the projection of $t$ in
  $\faktor{\Lat_\RR}{\Lat'_\RR}$, i.e. such that:
  \[
    \bar{v} = \argmin_{v\in\Lat}
    \|(t-v)+\Lat'\|_{{\Lat_\RR}/{\Lat'_\RR}}
  \]
\end{definition}

This definition makes sense thanks to the discreteness of
the quotient lattice $\faktor{\Lat}{\Lat'}$ in the real vector
space $\faktor{\Lat_\RR}{\Lat'_\RR}$.

\begin{exemple}
  To illustrate this definition, we give two examples in dimension 3, of rank 1
  and 2 nearest colattices.
  Set $\Lat$ a rank 3 lattice, and fix $\Lat_1$ and $\Lat_2$ two pure
  sublattices of respective rank 1 and 2. Denote by $\pi_i$ the canonical
  projection onto the quotient $\faktor{\Lat}{\Lat_i}$, which is of dimension
  $3-i$  for $i\in\{1,2\}$. The $\Lat_i$-closest colattice to $t$, denoted
  by $v_i+\Lat_i$ is such that $\pi_i(v_i)$ is a closest vector to $\pi_i(t)$
  in the corresponding quotient lattice. Figures (\textsc{a}) and (\textsc{b})
  respectively depict these situations.

  \begin{figure}\makebox[\textwidth][c]{
    \subfloat[The $\Lat_2$-nearest colattice $\color{duckBlue}v+\Lat_2$
    relative to $\color{goldOrange}t$, in green.]{
      \begin{tikzpicture}[scale=1.5]
        \nearestplane{}
      \end{tikzpicture}
    }\subfloat[The $\Lat_1$-nearest colattice $\color{duckBlue}v+\Lat_1$
    relative to $\color{goldOrange}t$.]{
      \begin{tikzpicture}[scale=1.5]
        \nearestcospace{}
      \end{tikzpicture}
    }}
  \end{figure}

\end{exemple}

\begin{remark}
  A computational insight on \cref{def:nearest_colattice} is
  to view a nearest colattice as a solution to an instance of
  exact-\CVP~in the quotient lattice $\faktor{\Lat}{\Lat'}$.
\end{remark}

Taking the same notations as in \cref{def:nearest_colattice}, let us
project $t$ orthogonally onto the affine space $v+\Lat'_\RR$, and take
$w$ a closest vector to this projection.
The vector $w$ is then relatively close to $t$. Let us quantify its defect
of closeness towards an actual closest vector to $t$:

\begin{proposition}
  \label{prop:defect_target}
  With the same notations as above:
  \[ \|t-w\|^2 \leq \mu\left(\faktor{\Lat}{\Lat'}\right)^2 +
  \mu\left(\Lat'\right)^2 \]
\end{proposition}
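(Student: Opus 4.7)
The plan is to split the distance $\|t-w\|$ using Pythagoras along the orthogonal decomposition induced by the projection onto the affine subspace $v+\Lat'_\RR$, then bound each piece by the corresponding covering radius.

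First I would name the intermediate point: let $p$ denote the orthogonal projection of $t$ onto the affine subspace $v+\Lat'_\RR$, so that $w$ is a closest lattice point of $v+\Lat'$ to $p$. By construction $t-p$ is orthogonal to the linear space $\Lat'_\RR$, whereas $w-p$ lies inside $\Lat'_\RR$ (since both $w$ and $p$ are in $v+\Lat'_\RR$). This orthogonality gives the Pythagorean identity
\[
\|t-w\|^2 = \|t-p\|^2 + \|p-w\|^2,
\]
so it suffices to bound each of the two terms separately by the announced quantities.

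For the first term, I would identify $\|t-p\|$ with the norm of the coset $(t-v)+\Lat'$ in the quotient lattice $\faktor{\Lat_\RR}{\Lat'_\RR}$. Since $v+\Lat'$ is, by hypothesis, a nearest $\Lat'$-colattice to $t$, this norm is exactly the distance from the image of $t$ to the quotient lattice $\faktor{\Lat}{\Lat'}$, which is at most $\mu(\faktor{\Lat}{\Lat'})$ by definition of the covering radius. For the second term, I would use that $v+\Lat'$ is an affine translate of $\Lat'$ isometric to it inside the affine subspace $v+\Lat'_\RR$, so that the distance from any point of this subspace (in particular $p$) to a closest element of $v+\Lat'$ is at most $\mu(\Lat')$.

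There is essentially no obstacle here: the statement is geometric and the whole argument reduces to checking the orthogonal decomposition carefully. The only subtle point is making sure that the distance in the quotient $\faktor{\Lat_\RR}{\Lat'_\RR}$ (used in the definition of nearest colattice) really coincides with the Euclidean distance $\|t-p\|$ in the ambient space, which follows directly from the definition $\|v+\Lat'\|_{\Lat_\RR/\Lat'_\RR} = \inf_{v'\in\Lat'_\RR}\|v-v'\|_\Lat$ recalled in Section~\ref{sec:sublattices_quotients}. Combining the two bounds with the Pythagorean identity yields the claim.
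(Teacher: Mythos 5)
Your proof is correct and is precisely the argument the paper intends: the paper's proof is simply ``Clear by Pythagoras' theorem,'' and your write-up supplies exactly the implicit decomposition $\|t-w\|^2=\|t-p\|^2+\|p-w\|^2$ along $v+\Lat'_\RR$, bounding the first term by $\mu\left(\faktor{\Lat}{\Lat'}\right)$ via the nearest-colattice property and the second by $\mu(\Lat')$ since $v+\Lat'$ is an isometric translate of $\Lat'$ inside that affine subspace. No gaps; this is the same approach, just spelled out.
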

\begin{proof}
  Clear by Pythagoras' theorem.
\end{proof}

By definition of the covering radius, we then have:
\begin{corollary}[Subadditivity of the covering radius over short exact
  sequences]
  \label{cor:subadditivity}
  short exact sequence of lattices. Then we have:
  \[ \mu(\Lat)^2 \leq \mu\left(\faktor{\Lat}{\Lat'}\right)^2 +
  \mu\left(\Lat'\right)^2 \]
\end{corollary}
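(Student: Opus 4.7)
The plan is to deduce the statement as a direct consequence of \cref{prop:defect_target}, the point being that the ``defect'' bound there was proven for an arbitrary target vector, and the covering radius is defined as the supremum of the distance to the lattice over such targets.

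Concretely, I would fix an arbitrary target $t\in \Lat_\RR$. First I invoke \cref{def:nearest_colattice} to pick a nearest $\Lat'$-colattice $\bar{v}=v+\Lat'$ to $t$; by definition of the norm on $\faktor{\Lat_\RR}{\Lat'_\RR}$ as the distance to the subspace $\Lat'_\RR$, and by definition of the covering radius of the quotient lattice, we get
\[
\|(t-v)+\Lat'\|_{\Lat_\RR/\Lat'_\RR}\;\leq\;\mu\!\left(\faktor{\Lat}{\Lat'}\right).
\]
Next I project $t$ orthogonally onto the affine subspace $v+\Lat'_\RR$ and, inside the affine lattice $v+\Lat'\subset\Lat$, I pick a closest vector $w$ to this projection. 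The existence of $w$ within distance $\mu(\Lat')$ of the projection is exactly the definition of the covering radius of $\Lat'$ applied to the shifted copy $v+\Lat'_\RR$.

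Then \cref{prop:defect_target}, which has just been established by Pythagoras, gives
\[
\|t-w\|^2\;\leq\;\mu\!\left(\faktor{\Lat}{\Lat'}\right)^2+\mu\!\left(\Lat'\right)^2.
\]
Since $w\in \Lat$, this yields $\dist(t,\Lat)^2\leq \mu(\Lat/\Lat')^2+\mu(\Lat')^2$. The bound holds for every $t\in\Lat_\RR$, so taking the supremum over $t$ produces the claimed subadditivity
\[
\mu(\Lat)^2\;\leq\;\mu\!\left(\faktor{\Lat}{\Lat'}\right)^2+\mu\!\left(\Lat'\right)^2.
\]

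There is no real obstacle here; the only point requiring mild care is identifying the quotient norm with the Euclidean distance to $\Lat'_\RR$ in $\Lat_\RR$ (which is already built into the canonical lattice structure recalled in \cref{sec:sublattices_quotients}), so that the orthogonal projection of $t$ onto $v+\Lat'_\RR$ indeed realizes the quotient distance and Pythagoras applies cleanly.
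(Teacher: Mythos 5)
Your proposal is correct and follows essentially the same route as the paper: the corollary is obtained by applying \cref{prop:defect_target} (the Pythagorean defect bound for the vector $w$ built from a nearest $\Lat'$-colattice and a closest point in $v+\Lat'$) to an arbitrary target $t$, and then invoking the definition of $\mu(\Lat)$ as the supremum of $\dist(t,\Lat)$ over all $t\in\Lat_\RR$. You merely spell out the bookkeeping (the quotient norm as distance to $\Lat'_\RR$ and the two covering-radius bounds) that the paper leaves implicit in the phrase ``by definition of the covering radius.''
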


This inequality is tight, as being an equality when
there exists a sublattice $\Lat''$ such that $\Lat'\oplus \Lat'' = \Lat$ and
$\Lat'' \subseteq \orth{\Lat'}$.

\subsection{Recursion along a filtration}
Let us now consider a filtration \[ \Lat_0 \subset \Lat_1 \subset \cdots
  \subset \Lat_k = \Lat \] and a target vector $t\in\Lat_\RR$.
  Repeatedly applying \cref{cor:subadditivity} along the subfiltrations
  $0\subset \Lat_i \subset \Lat_{i+1}$,
  yields a sequence of inequalities $\mu(\Lat_{i+1})^2-\mu\left(\Lat_i\right)^2 \leq
  \mu\left({\Lat_{i+1}}/{\Lat_{i}}\right)^2$. The telescoping sum now
  gives the relation:
  \[
    \mu(\Lat)^2 \leq \sum_{i=1}^k
    \mu\left(\faktor{\Lat_{i+1}}{\Lat_i}\right)^2.
  \]
  This formula has a very natural algorithmic interpretation as a recursive
  oracle for approx-\CVP:
  \begin{enumerate}
    \item Starting from the target vector $t$, we solve the \CVP{} instance
      corresponding to $\pi(t)$ in the quotient $\faktor{\Lat_k}{\Lat_{k-1}}$
      with $\pi$ the canonical projection onto this quotient to find
      $v+\Lat_{k-1}$ the
      nearest $\Lat_{k-1}$-colattice to $t$.
    \item We then project $t$ orthogonally onto
      $v+({\Lat_{k-1}}\otimes_\ZZ\RR)$. Call $t'$
      this vector.
    \item A recursive call to the algorithm on the instance $(t'-v,
      \Lat_0\subset \cdots \subset \Lat_{k-1}))$
      yields a vector $w \in \Lat_2$.
    \item Return $w+v$.
  \end{enumerate}

Its translation in pseudo-code is given in an iterative manner in
the algorithm \algName{Nearest-Colattice}.
\begin{algo}[algotitle={Nearest-Colattice},
  label=alg:nearest_colattice]
  \sffamily
  \begin{algorithm}[H]
\BlankLine
      \KwIn{A filtration $\{0\} = \Lat_0 \subset \Lat_1 \subset \cdots
  \subset \Lat_k = \Lat$, a target $t\in\Lat_\RR$ }
      \KwResult{A vector in $\Lat$ close to $t$.}
\BlankLine
    $s \gets -t$\;
    \For{$i=k$ \Downto $1$}{$s \gets s-\algName{Lift}(\argmin_{h\in\Lambda_i/\Lambda_{i-1}}
        \|v-h\|)$\;
    }
    \Return $t+s$
  \end{algorithm}
\end{algo}

\begin{proposition}
  \label{prop:sum_covering}
  Let $B$ be a basis of a lattice $\Lat$ of rank $n$.
  Given a target $t\in\Lat_\RR$,
  the algorithm \algName{Nearest-Colattice} finds a vector $x\in \Lat$ such that
  \[ \|x-t\|^2 \leq \sum_{i=1}^k \mu\left(\faktor{\Lat_{i+1}}{\Lat_i}\right)^2
  \] in time $T_{\CVP}(\beta)(n+\log \|t\|+\log \|B\|)^{\bigO{1}}$, where $\beta$
  is the largest gap of rank in the filtration: $\beta = \max_i
  (\rk(\Lat_{i+1})-\rk(\Lat_{i}))$.
\end{proposition}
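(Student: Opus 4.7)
I would adapt the Babai-nearest-plane analysis to the block setting prescribed by the filtration. For each $i=1,\dots,k$, let $W_i$ denote the orthogonal complement of $\Lat_{i-1,\RR}$ inside $\Lat_{i,\RR}$, so that $\Lat_\RR = W_1 \oplus \cdots \oplus W_k$ is an orthogonal decomposition. Through this canonical isometry, $\faktor{\Lat_i}{\Lat_{i-1}}$ is identified with the image of $\Lat_i$ under the orthogonal projection onto $W_i$, whose covering radius is exactly $\mu\left(\faktor{\Lat_i}{\Lat_{i-1}}\right)$. The whole analysis then consists in tracking how each $W_j$-component of the running residual $s$ evolves throughout the outer loop.

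\emph{Correctness.} At iteration $i$ of the loop (which runs from $k$ down to $1$), the $\argmin$ call returns the coset in $\faktor{\Lat_i}{\Lat_{i-1}}$ closest to the image of $s$ in that quotient; after applying \algName{Lift}, the subtracted vector $\tilde{c}_i \in \Lat_i$ therefore satisfies that the $W_i$-component of $s - \tilde{c}_i$ has norm at most $\mu\left(\faktor{\Lat_i}{\Lat_{i-1}}\right)$, by definition of the covering radius in the quotient. The key observation is that for every \emph{subsequent} index $j < i$ of the loop, the lift $\tilde{c}_j$ belongs to $\Lat_j \subseteq \Lat_{i-1}$, a subspace orthogonal to $W_i$; hence the $W_i$-component of $s$ is frozen after iteration $i$ and never disturbed again. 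Applying Pythagoras to the orthogonal direct sum, the final $s$ satisfies $\|s\|^2 \leq \sum_{i=1}^{k} \mu\left(\faktor{\Lat_i}{\Lat_{i-1}}\right)^2$; since the returned vector $t+s$ differs from $t$ exactly by $s$, this yields the bound claimed in the proposition (under the natural reindexing of the sum, $\Lat_{k+1}$ not being defined).

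\emph{Complexity and main obstacle.} Each of the at most $k \leq n$ iterations performs one exact $\CVP$ call in a lattice of rank $\rk(\Lat_i) - \rk(\Lat_{i-1}) \leq \beta$, contributing $T_{\CVP}(\beta)$ times a polynomial in the input size; one call to \algName{Lift}, whose nearest-plane procedure runs in polynomial time; and one lattice subtraction. The only genuinely delicate point, which I expect to be the main technical hurdle, is the bit-size control of the successive residuals $s$ and of the lifts $\tilde{c}_i$: they must remain polynomial in $n$, $\log\|t\|$ and $\log\|B\|$ for the claimed running time to hold. This is ensured by the standing convention recalled at the start of the section that all manipulated bases have size polynomial in the input, together with the standard property of Babai's nearest plane that \algName{Lift} outputs a representative whose Gram-Schmidt coordinates against the given basis lie in $[-\nicefrac{1}{2},\nicefrac{1}{2}]$, so that its norm, and hence the norms of all iterates, stays polynomially bounded in the input size.
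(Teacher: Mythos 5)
Your proof is correct and follows essentially the same route as the paper: the paper's own (very terse) proof simply invokes the preceding discussion, namely the telescoping Pythagoras/subadditivity argument for the covering radii along the filtration, which is exactly what your orthogonal decomposition $W_1\oplus\cdots\oplus W_k$ and the ``frozen component'' observation make explicit in the iterative formulation. Your added care about the reindexing of the sum and the bit-size of the residuals is consistent with the paper's standing convention on basis sizes and does not change the argument.
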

\begin{proof}
  The bound on the quality of the approximation is a direct consequence of the
  discussion conducted before. The running time bound derives from the
  definition of $T_{\mathrm{CVP}}$ and on the fact that the \algName{Lift} operations
  can be conducted in polynomial time.
\end{proof}

\begin{remark}[Retrieving Babai's algorithm]
  \label{rem:babai}
  In the specific case where the filtration is complete, that is
  to say that $\rk(\Lat_i)=i$ for each $1\leq i \leq n$, the
  \algName{Nearest-Colattice} algorithm coincides with the
  so-called \emph{Babai's nearest plane} algorithm. In particular, it recovers
  a vector at distance \[\sqrt{\sum_{i=1}^{n}
  \mu\left(\faktor{\Lat_i}{\Lat_{i-1}}\right)^2} =\frac12\sqrt{\sum_{i=1}^{n}
  \covol\left(\faktor{\Lat_i}{\Lat_{i-1}}\right)^2},\]
    by using that for each index $i$, we have
    $\mu\left(\faktor{\Lat_i}{\Lat_{i-1}}\right) =
    \frac12 \covol\left(\faktor{\Lat_i}{\Lat_{i-1}}\right)$ as these quotients are
      one-dimensional.
\end{remark}

The bound given in \cref{prop:sum_covering} is not easily instantiable
as it requires to have access to the covering radius of the successive
quotients of the filtration. However, under a mild heuristic on random
lattices, we now exhibit a bound which only depends on the parameter $\beta$ and
the covolume of $\Lat$.

\subsection{On the covering radius of a random lattice}

In this section we prove that the covering radius of a random lattice
behaves essentially in $\sqrt{\rk(\Lambda)}$.

In 1945, Siegel~\cite{Siegel} proved that the projection of the Haar measure
of $\SL_n(\RR)$
over the quotient $\SL_n(\RR)/\SL_n(\ZZ)$ is of finite mass, yielding a
natural probability
distribution $\nu_n$ over the moduli space $\mathcal{L}_n$ of unit-volume
lattices. By construction this distribution is translation-invariant,
that is, for any measurable set $\mathcal{S} \subseteq \mathcal{L}_n$ and all
$U \in \SL_n(\ZZ)$, we have $\nu_n(\mathcal{S}) = \nu_n(\mathcal{S}U)$. A
\emph{random lattice} is then defined as a unit-covolume lattice in $\RR^n$
drawn under the probability distribution $\nu_n$.

We first recall an estimate due to Rogers~\cite{rogers1955mean}, giving the
expectation\footnote{The result proved by Rogers is actually more general and
bounds all the moment of the enumerator of lattice points. For the purpose of
this work, only the first moment is actually required.} of the number of
lattice points in a fixed set.

\begin{theorem}[Rogers' average]\label{thm:rogers}
  Let $n\leq 4$ be an integer
  and $\rho$ be the characteristic function of a Borel set $C$ of $\RR^n$
  whose volume is $V$, centered at 0. Then:
  \[
    \begin{aligned}
      0  \le  \int_{\mathcal{L}_n} \rho(\Lat\setminus \{0\})  d\nu_n(\Lat)
      &-
      2
      e^{-V/2} \sum_{r=0}^{\infty} \frac{r}{r!} (V/2)^r \\
      &
      \le \left(V+1\right)
      \left(
        6 \left(\sqrt{\frac{3}{4}}\right)^n + 105\cdot 2^{-n}
      \right).
    \end{aligned}
    \]
\end{theorem}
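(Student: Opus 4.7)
The plan is to show that the stated ``main term'' on the right-hand side collapses to $V$, then invoke Siegel's mean value theorem to compute the integral, and finally bound the remaining gap via elementary zeta-function estimates together with Rogers' analysis of the fundamental domain.

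First I would simplify the weighted exponential sum. Differentiating $e^x=\sum_{r\geq 0}x^r/r!$ gives $\sum_{r\geq 0}rx^r/r!=xe^x$, so with $x=V/2$ one obtains
\[
2e^{-V/2}\sum_{r=0}^{\infty}\frac{r}{r!}(V/2)^r = 2e^{-V/2}\cdot(V/2)\,e^{V/2} = V.
\]
Hence the statement reduces to showing that $\int_{\mathcal{L}_n}\rho(\Lat\setminus\{0\})\,d\nu_n(\Lat)-V$ is non-negative and at most the claimed error.

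Second, I would apply Siegel's mean value formula in the probability-measure normalization of $\nu_n$: for any integrable $f\colon\RR^n\to\RR$, the expectation of $\sum_{w\text{ primitive in }\Lat}f(w)$ equals $\int_{\RR^n}f$. Decomposing every non-zero lattice vector uniquely as $k\cdot w$ with $k\geq 1$ and $w$ primitive, then substituting $y=kx$, one finds that the expected count of \emph{all} non-zero lattice points in $C$ equals $\zeta(n)V$. The resulting non-negative gap $V(\zeta(n)-1)$ is bounded by $2V\cdot 2^{-n}$ via the trivial estimate $\zeta(n)-1\leq 2\cdot 2^{-n}$ for $n\geq 2$, which already absorbs the $2^{-n}$ summand in the claimed error bound.

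The main obstacle, and where Rogers' real work enters, is the $(\sqrt{3/4})^n$ summand, which cannot arise from the primitive-versus-non-primitive correction alone. To treat it I would follow Rogers' original strategy: parametrize $\mathcal{L}_n$ by a fundamental domain for $\SL_n(\ZZ)$ acting on $\SL_n(\RR)$ via the Iwasawa decomposition, pull the Siegel transform of $\rho$ through to an explicit integral over the Siegel set, and bound the resulting expression using estimates on generalized Epstein zeta functions over the cuspidal region. The factor $(\sqrt{3/4})^n$ then appears from tail bounds in these zeta-function estimates, reflecting exponential decay of lattice-point counts near the boundary of the fundamental domain; carrying out these geometric estimates, rather than merely invoking Siegel, is the technical heart of the result and the part I expect to require the most care.
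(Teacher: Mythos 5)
The paper itself gives no proof of this statement: it is quoted from Rogers \cite{rogers1955mean} (the footnote even stresses that only the first moment is used), so your attempt has to stand on its own. Your opening step is right: $2e^{-V/2}\sum_{r\ge 0}\frac{r}{r!}(V/2)^r=V$, so the claim is that the mean number of nonzero points of a $\nu_n$-random unimodular lattice in $C$ lies between $V$ and $V$ plus the stated error.

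The genuine gap is in your invocation of Siegel. Under the probability measure $\nu_n$, Siegel's mean value theorem states $\int_{\mathcal{L}_n}\sum_{v\in\Lat\setminus\{0\}}f(v)\,d\nu_n(\Lat)=\int_{\RR^n}f$, while the sum over \emph{primitive} vectors has expectation $\frac{1}{\zeta(n)}\int_{\RR^n}f$; you have dropped the $1/\zeta(n)$, and this is what produces your spurious value $\zeta(n)V$ for the full count. The correct value is exactly $V$ (your own decomposition $v=kw$ confirms it: $\sum_{k\ge1}\zeta(n)^{-1}Vk^{-n}=V$). Once the normalization is fixed, the first-moment statement is an immediate consequence of Siegel: the difference in the display is identically $0$, which satisfies both inequalities, and no further work is required. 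Your final paragraph rests on the misconception that each summand of Rogers' error bound must be ``accounted for'' by some contribution; it is merely an upper estimate, needed by Rogers for the higher moments $k\ge 2$ which this paper never uses. Moreover, that paragraph is only a plan --- ``parametrize by a Siegel set and bound Epstein-zeta tails'' is precisely the technical heart of Rogers' argument, and you do not carry it out --- so if the full moment formula were actually required, your write-up would not establish it. (Two minor points: your estimate $\zeta(n)-1\le 2\cdot 2^{-n}$ fails at $n=2$, though it holds for $n\ge 3$ and is moot after the correction; and the hypothesis ``$n\le 4$'' in the paper's statement is evidently a typo for $n\ge 4$, which is where Rogers' bound is meant to apply.)
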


This allows to prove that the first minimum of a random lattice is
greater than a multiple of $\sqrt{n}$.

\begin{lemma}
  \label{lem:estimate_lambda1}
  Let $\Lambda$ be a random lattice of rank $n$. Then, with probability
  $1-2^{-\Omega(n)}$, $\lambda_1(\Lambda)>c\sqrt{n}$ for a universal constant
  $c>0$.
\end{lemma}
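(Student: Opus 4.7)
The plan is to apply Rogers' theorem (\cref{thm:rogers}) to the indicator function $\rho$ of the Euclidean ball $C = B(0, c\sqrt{n}) \subset \RR^n$, and then conclude via a first moment (Markov) argument. The event $\lambda_1(\Lambda) \leq c\sqrt{n}$ is exactly the event that $\rho(\Lambda \setminus \{0\}) \geq 1$, so by Markov's inequality it suffices to upper bound the expectation $\int_{\mathcal{L}_n} \rho(\Lambda \setminus \{0\}) \, d\nu_n(\Lambda)$ by $2^{-\Omega(n)}$.

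First I would simplify the main term in Rogers' estimate: since $\sum_{r=0}^{\infty} \tfrac{r}{r!}(V/2)^r = (V/2)e^{V/2}$, this main term equals exactly $V$. Thus the bound provided by \cref{thm:rogers} reads
\[
  \int_{\mathcal{L}_n} \rho(\Lambda \setminus \{0\}) \, d\nu_n(\Lambda)
  \;\leq\; V + (V+1)\bigl(6 (\tfrac{\sqrt{3}}{2})^n + 105 \cdot 2^{-n}\bigr),
\]
where $V = \Vol(B(0, c\sqrt{n}))$ in $\RR^n$.

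Next I would estimate $V$ using Stirling's formula on the volume of the Euclidean ball,
\[
  V \;=\; \frac{\pi^{n/2} (c\sqrt{n})^n}{\Gamma(n/2+1)}
  \;=\; \frac{1}{\sqrt{\pi n}}\bigl(2\pi e c^2\bigr)^{n/2} (1+o(1)).
\]
Choosing any constant $c > 0$ with $2\pi e c^2 < 1$ (for instance $c = \tfrac{1}{2\sqrt{2\pi e}}$) forces $V = 2^{-\Omega(n)}$. The secondary term $(V+1)(6(\sqrt{3}/2)^n + 105\cdot 2^{-n})$ is likewise $2^{-\Omega(n)}$, and combining both we get the total upper bound $2^{-\Omega(n)}$ on the expected number of nonzero lattice points of $\Lambda$ inside $B(0, c\sqrt{n})$. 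Markov's inequality then yields $\Pr[\lambda_1(\Lambda) \leq c\sqrt{n}] \leq 2^{-\Omega(n)}$, which is the claimed bound.

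The only mild subtlety is to ensure the constant regime is correctly handled: one must check that the secondary error term in Rogers' theorem stays exponentially small when multiplied by $V+1$, which is immediate since $V \to 0$ exponentially, and that $c$ is chosen small enough for $(2\pi e c^2)^{n/2}$ to dominate the $2^{-\Omega(n)}$ decay. Beyond verifying these constants, the rest of the argument is essentially a direct invocation of \cref{thm:rogers} and Stirling; there is no conceptual obstacle.
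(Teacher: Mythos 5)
Your proposal is correct and follows essentially the same route as the paper: both apply \cref{thm:rogers} to a ball of exponentially small volume whose radius is still $\Theta(\sqrt{n})$, and conclude via Markov's inequality. The only cosmetic difference is that you fix the radius $c\sqrt{n}$ and compute the volume by Stirling, whereas the paper fixes the volume at $0.99^n$ and lower-bounds the corresponding radius by $c\sqrt{n}$.
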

\begin{proof}
  Consider the ball $C$ of volume $0.99^n$.
  It has a radius lower bounded by $c\sqrt{n}$.
  By \cref{thm:rogers}, the expectation of
  the number of lattice points in $C$ is at most
  \[128\left(\frac34\right)^{\frac{n}{2}}(V+1)+ V \in (1+\littleO{1})V. \]
  This estimate thus bounds the probability that there exists a non-zero
  lattice vector in $C$ by $1-2^{-\Omega(n)}$, using Markov's inequality.
\end{proof}

Using the transference theorem, we then derive the following estimate
on the covering radius of a random lattice:
\begin{theorem}
  Let $\Lambda$ be a random lattice of rank $n$.
  Then, with probability $1-2^{-\Omega(n)}$, $\mu(\Lambda)<d\sqrt{n}$ for a
  universal constant $d$.
\end{theorem}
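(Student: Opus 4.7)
The plan is to combine the lemma on the first minimum of a random lattice with Banaszczyk's transference theorem, applied to the dual of $\Lambda$. The key observation is that the distribution $\nu_n$ on the moduli space of unit-covolume lattices is invariant under duality: the involution $g\SL_n(\ZZ) \mapsto (g^{-T})\SL_n(\ZZ)$ preserves the Haar measure on $\SL_n(\RR)/\SL_n(\ZZ)$, so if $\Lambda$ is distributed according to $\nu_n$, then so is $\dual{\Lambda}$.

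First I would invoke \cref{lem:estimate_lambda1} applied to $\dual\Lambda$: with probability $1-2^{-\Omega(n)}$ we have $\lambda_1(\dual\Lambda) > c\sqrt{n}$. Next, I would apply \cref{thm:transference} (Banaszczyk), whose first inequality reads $2\lambda_1(\dual\Lambda)\mu(\Lambda) \le n$. Rearranging gives
\[
\mu(\Lambda) \le \frac{n}{2\lambda_1(\dual\Lambda)} < \frac{n}{2c\sqrt{n}} = \frac{\sqrt{n}}{2c},
\]
so the statement holds with $d = \tfrac{1}{2c}$.

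The only subtle point, and therefore the main thing to justify carefully, is the duality-invariance of $\nu_n$. This is a standard consequence of the fact that the map $g \mapsto g^{-T}$ is an automorphism of $\SL_n(\RR)$ that normalizes $\SL_n(\ZZ)$ and preserves the Haar measure (since it is a continuous group automorphism of a unimodular group), hence descends to a measure-preserving involution on $\SL_n(\RR)/\SL_n(\ZZ)$ that corresponds exactly to taking the dual lattice under Siegel's identification. Everything else in the proof is a one-line chain of inequalities, so no further obstacle arises.
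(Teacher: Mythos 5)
Your proposal is correct and follows essentially the same route as the paper: dual of a random lattice is again random, apply \cref{lem:estimate_lambda1} to $\dual{\Lambda}$, then Banaszczyk's transference theorem; your explicit justification of duality-invariance and retention of the factor $2$ only make the same argument slightly more detailed and marginally tighter.
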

\begin{proof}
  First remark that the dual lattice $\dual{\Lambda}$ follows the same
  distribution.  Hence, using the estimate of \cref{lem:estimate_lambda1}, we
  know that with probability $1-2^{-\Omega(n)}$,
  $\lambda_1(\dual{\Lambda})>c\sqrt{n}$. Banaszczyk's transference theorem
  indicates that in this case, \[\mu(\Lambda)\leq
  \frac{n}{\lambda_1(\dual{\Lambda})} \leq \frac{\sqrt{n}}{c},\]
  concluding the proof.
\end{proof}

This justifies the following heuristic:
\begin{heuristic}
  \label{heur:lambda1_heur}
  In algorithm \algName{Nearest-Colattice}, for any index $i$, we have
  $\mu\left(\faktor{\Lat_{i+1}}{\Lat_i}\right)\leq c
  \lambda_1\left(\faktor{\Lat_{i+1}}{\Lat_i}\right)$ for some
  universal constant $c$.
\end{heuristic}
The Gaussian heuristic suggests that ``almost all'' targets $t$ are at
distance $(1+\littleO{1})\lambda_1(\Lambda)$, so that for practical purpose in
the analysis we can take $c=1$ in \cref{heur:lambda1_heur}.

\subsection{Quality of the algorithm on random lattices}

\begin{theorem}
  \label{thm:quality_complexity_average}
  Let $\beta>0$ be a positive integer and $B$ be a basis of a lattice $\Lat$
  of rank $n>2\beta$.
  After precomputations using a time bounded by
  $T(\beta)(n+\log \|B\|)^{\bigO{1}}$, given a target $t\in\Lat_\RR$ and under
  \cref{heur:lambda1_heur}, the algorithm \algName{Nearest-Colattice}
  finds a vector $x\in \Lat$ such that \[ \|x-t\| \leq
  \Theta(\beta)^{\frac{n}{2\beta}}\covol(\Lat)^{\frac{1}{n}} \] in time
  $T_{\CVP}(\beta)\textrm{Poly}(n,\log \|t\|,\log \|B\|)$.
\end{theorem}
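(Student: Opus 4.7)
The plan is to instantiate \algName{Nearest-Colattice} on a filtration with block gaps of size~$\beta$, built once from a \DBKZ{}-reduced basis, and to control each term of the covering-radius sum of \cref{prop:sum_covering} by the corresponding block covolume, via \cref{heur:lambda1_heur} together with Minkowski's bound. The core observation is that \cref{thm:HBKZ} produces block covolumes forming a decreasing geometric progression whose first term dominates the whole sum.

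First, I precompute a \DBKZ-reduced basis of $\Lat$ in time $T(\beta)(n+\log\|B\|)^{\bigO{1}}$ and extract its complete filtration $\{0\}=L_0\subset L_1\subset\cdots\subset L_n=\Lat$. I then aggregate it into a filtration of block size $\beta$ by setting $\Lat_i=L_{i\beta}$ for $0\leq i<k=\lceil n/\beta\rceil$ and $\Lat_k=\Lat$. By \cref{thm:HBKZ} the $i$-th block covolume satisfies
\[
b_i\;:=\;\covol(\Lat_i/\Lat_{i-1})\;=\;\prod_{j=(i-1)\beta+1}^{i\beta}\covol(L_j/L_{j-1}),
\]
and a direct multiplication (averaging the exponent over the block) gives
\[
b_i^{1/\beta}\;\approx\;\Theta(\beta)^{\frac{n-\beta-2(i-1)\beta}{2(\beta-1)}}\covol(\Lat)^{1/n}.
\]
The hypothesis $n>2\beta$ ensures that the first exponent is positive, so $b_1^{1/\beta}=\Theta(\beta)^{n/(2\beta)}\covol(\Lat)^{1/n}$. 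Moreover, consecutive ratios $b_{i+1}^{1/\beta}/b_i^{1/\beta}=\Theta(\beta)^{-\beta/(\beta-1)}$ are bounded away from~$1$, so the sequence $(b_i^{1/\beta})_i$ decays geometrically.

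Running \algName{Nearest-Colattice} on this filtration costs $T_{\CVP}(\beta)(n+\log\|t\|+\log\|B\|)^{\bigO{1}}$ by \cref{prop:sum_covering}, since every block has rank at most~$\beta$. Under \cref{heur:lambda1_heur} the covering radius of each rank-$\beta$ quotient is controlled by its first minimum, and Minkowski's bound yields $\mu(\Lat_i/\Lat_{i-1})\leq c\sqrt{\beta}\,b_i^{1/\beta}$ for a universal constant $c$. Summing and exploiting the geometric decay of the $b_i^{1/\beta}$,
\[
\|x-t\|^2\;\leq\;c^2\beta\sum_{i=1}^{k}\bigl(b_i^{1/\beta}\bigr)^2\;\leq\;\frac{c^2\beta}{1-\Theta(\beta)^{-2\beta/(\beta-1)}}\,\bigl(b_1^{1/\beta}\bigr)^2\;=\;\Theta(\beta)\cdot\Theta(\beta)^{n/\beta}\covol(\Lat)^{2/n}.
\]
Taking square roots and absorbing the residual $\sqrt{\beta}$ factor into $\Theta(\beta)^{n/(2\beta)}$ delivers the claimed estimate $\|x-t\|\leq \Theta(\beta)^{n/(2\beta)}\covol(\Lat)^{1/n}$.

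The main technical point is the control of the geometric ratio $\Theta(\beta)^{-\beta/(\beta-1)}$: since $\beta/(\beta-1)\to 1$ from above, the ratio remains bounded away from $1$ uniformly in $\beta\geq 2$, which is exactly what makes the sum dominated by its first term rather than producing an additional $k=\Theta(n/\beta)$ factor. A routine check handles the last block when $\beta\nmid n$, as it has rank $<\beta$ and a covolume bounded (up to the implicit $\Theta$) by the preceding block. Everything else, including the polynomial bookkeeping on bit-sizes of the lifted representatives used inside \algName{Lift}, is standard.
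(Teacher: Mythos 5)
Your proposal is correct and follows essentially the same route as the paper's own proof: a \DBKZ{} precomputation, aggregation of the reduced basis into a block-$\beta$ filtration, control of each block covolume via \cref{thm:HBKZ}, the bound $\mu(\Lat_i/\Lat_{i-1})\leq \Theta(\sqrt{\beta})\,b_i^{1/\beta}$ from \cref{heur:lambda1_heur} plus Minkowski, and domination of the sum in \cref{prop:sum_covering} by its first (geometrically largest) term, with the residual $\sqrt{\beta}$ and the exponent discrepancy between $\frac{n}{2(\beta-1)}$ and $\frac{n}{2\beta}$ absorbed into the $\Theta(\beta)$ base exactly as the paper does. Your explicit justification of the geometric decay and of the last incomplete block is, if anything, slightly more detailed than the paper's ``only the first term is significant''.
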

\begin{proof}
  We start by reducing the basis $B$ of $\Lat$ using the \DBKZ{} algorithm,
  and collect the vectors in blocks of size $\beta$, giving a filtration:
  \[
    \{0\} = \Lat_0 \subset \Lat_1 \subset \cdots \subset \Lat_k = \Lat,
  \]
  for $k = \left\lceil\frac{n}{\beta}\right\rceil$ and
  $\rk\left(\faktor{\Lat_{i+1}}{\Lat_i}\right) = \beta$ for each index $i$
  except the penultimate one, of rank
  $n-\beta\left\lfloor\frac{n}{\beta}\right\rfloor$.
  We define $l_i$ as $\rk\left(\faktor{\Lat_{i+1}}{\Lat_i}\right)$.
  By \cref{thm:HBKZ} and
  finite induction in each block using the multiplicativity of the covolume
  over short exact sequences, we have for $i<k-1$
  \[
    \begin{aligned}
      \covol\left(\faktor{\Lat_{i+1}}{\Lat_i}\right)^{\frac{1}{l_i}}& \approx
\covol\left(\Lat\right)^\frac1n
      \left(\prod_{j=i\beta}^{i\beta+l_i-1}
      \Theta(\beta)^{\frac{n+1-2j}{2(\beta-1)}}\right)^{\frac{1}{l_i}}
       \\
                                                    &=
                                                    \Theta(\beta)^{\frac{
                                                        n + 2 - 2i\beta-l_i
                                                    }{2(\beta-1)}}
                                                    \covol(\Lat)^\frac{1}{n}.
    \end{aligned}
  \]
  We also have
  \[
  \Theta(\sqrt{\beta})\covol\left(\faktor{\Lat_{k}}{\Lat_{k-1}}\right)^{1/\beta}\approx
  \Theta(\beta)^{\frac{n+1-2(n-\beta)}{2(\beta-1)}}\covol^{\frac1n} \Lat
  \]
  so that the previous approximation is also true for $i=k-1$.
  Using \cref{heur:lambda1_heur} and Minkowski's first theorem, we can
  estimate the covering radius of this quotient as:
  \[
    \mu\left(\faktor{\Lat_{i+1}}{\Lat_i}\right) \leq \Theta(\sqrt{l_i})
  \Theta(\beta)^{\frac{n+2-2i\beta-l_i}{2(\beta-1)}}\covol^\frac1n{\Lat}.\]
Using \cref{prop:sum_covering}, now asserts that \algName{Nearest-Colattice}
returns a vector at distance from $t$ bounded by:
\[
  \covol(\Lat)^\frac1n\sum_{i=0}^k \Theta(\sqrt{l_i})
\Theta(\beta)^{\frac{n+2-2i\beta-l_i}{2(\beta-1)}}=
  \Theta(\beta)^{\frac{n}{2\beta-2}}\covol(\Lat)^\frac1n
\]
where the last equality stems from the condition $n\geq 2\beta$, so that only
  the first term is significant.
\end{proof}

Note that in the algorithm, all lattices depend only on $\Lat$, not on the
targets.  Therefore, it is possible to use \CVP{} algorithms after
precomputations.  These algorithms are significantly faster; we refer
to ~\cite{sieve20} for heuristic ones and
to~\cite{dadush2014closest,stephens2019time} for proven approximation
algorithms.
 \section{Proven \ACVP{} algorithm with precomputation}
In all of this section, let us fix an oracle
\oracle, solving the $\gamma$-\HSVP.
We solve \ACVP{}
with preprocessing from the oracle $\mathcal{O}$.

\begin{theorem}[\ACVPP{} oracle from \HSVP{} oracle]
  \label{thm:main_result}
  Let $\Lat$ be a lattice of rank $n$. Then one can solve the
  $(n^\frac32\gamma^3)$-closest vector problem in $\Lat$, using $2n^2$ calls
  to the oracle \oracle{} during precomputation, and polynomial time
  computations.
\end{theorem}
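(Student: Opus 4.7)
The plan is to run two sub-algorithms in parallel after a common precomputation and return whichever candidate is closer to $t$. The \emph{primal} sub-routine applies the $\gamma$-\HSVP{} oracle iteratively on successive projected sublattices of $\Lat$ --- using \cref{thm:lovasz_reduction} to convert Hermite guarantees into absolute ones --- to precompute an approximately \textsc{hkz}-reduced basis $v_1,\ldots,v_n$; running Babai's nearest plane (equivalently, \algName{Nearest-Colattice} with the complete filtration; see \cref{rem:babai}) then yields $x_1\in\Lat$ with
\[
  \|x_1-t\|\le \tfrac{1}{2}\sqrt{\sum_i \|v_i^*\|^2} \le \tfrac{\sqrt n\,\gamma}{2}\,\lambda_n(\Lat),
\]
using the Gram--Schmidt bound $\|v_i^*\|\le\gamma\,\lambda_i(\Lat)$ furnished by the reduction.

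The \emph{dual} sub-routine uses a precomputed short $y^*\in\dual{\Lat}$ with $\|y^*\|\le\gamma\,\lambda_1(\dual{\Lat})$. On input $t$, it rounds $\langle t,y^*\rangle$ to the nearest integer $k$; writing $d=\dist(t,\Lat)$, whenever $d\|y^*\|<\tfrac{1}{2}$ this forces $k=\langle v^*,y^*\rangle$ for the true closest vector, so the closest vector lies in the affine set $v_0+\Lat^{(1)}$ for any $v_0\in\Lat$ with $\langle v_0,y^*\rangle=k$, where $\Lat^{(1)}=\ker(y^*)\cap\Lat$ has rank $n-1$. We then recurse on this smaller instance, using precomputed analogous short vectors and reduced bases for the nested chain $\Lat\supset\Lat^{(1)}\supset\cdots\supset\Lat^{(n-1)}$.

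The analysis splits on $d\|y^*\|$. If $d\|y^*\|\ge\tfrac{1}{2}$, Banaszczyk's transference (\cref{thm:transference}) combined with $\|y^*\|\le\gamma\,\lambda_1(\dual{\Lat})$ yields $d\ge\lambda_n(\Lat)/(2\gamma n)$, so $\|x_1-t\|\le\tfrac{\sqrt n\,\gamma}{2}\,\lambda_n(\Lat)\le n^{3/2}\gamma^2\,d$, within the target factor $n^{3/2}\gamma^3$. Otherwise the dual rounding is exact; the recursive call on $\Lat^{(1)}$ returns (by induction on the rank) a vector with approximation factor at most $(n-1)^{3/2}\gamma^3$, and the orthogonal Pythagorean decomposition $\|t-(v_0+w)\|^2=\|t^{(1)}-w\|^2+s^2$, with $s$ the component of $t-v_0$ along $y^*$, shows that this factor is preserved at the top level, closing the induction.

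The main obstacle I expect is keeping the total precomputation within the claimed $2n^2$ oracle calls across all $n$ levels of recursion: a naive level-by-level reduction would cost $\Omega(n^3)$ calls, so the proof must exploit the nested chain $\Lat\supset\Lat^{(1)}\supset\cdots$ and derive the short dual vectors and reduced bases at each deeper stratum from projections of data already computed at shallower ones, rather than re-reducing each sublattice from scratch.
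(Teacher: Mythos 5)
Your overall architecture matches the paper's: a ``primal'' decoder that reaches distance $\approx\frac{\sqrt n\,\gamma}{2}\lambda_n(\Lat)$ via a reduced filtration plus Babai, a precomputed short dual vector, a case split at $\|y^*\|\,d=\tfrac12$ with Banaszczyk's transference in the large case, and exact recovery of $\langle y^*,v\rangle$ followed by a rank-$(n-1)$ recursion in the small case (your recursion on the sublattice $\ker(y^*)\cap\Lat$ versus the paper's projection orthogonal to $c$ is an inessential variation). However, there is a genuine gap in the quantitative accounting. Two of your bounds are not furnished by a $\gamma$-\HSVP{} oracle: \cref{thm:lovasz_reduction} only converts it into a $\gamma^{2}$-\SVP{} oracle, so the Gram--Schmidt bound you get is $\|v_i^*\|\leq\gamma^{2}\lambda_1(\pi_i(\Lat))\leq\gamma^{2}\lambda_i(\Lat)$, not $\gamma\lambda_i(\Lat)$; likewise the dual vector can only be guaranteed to satisfy $\|y^*\|\leq\gamma^{2}\lambda_1(\dual{\Lat})$ (a Hermite oracle gives nothing relative to $\lambda_1$ without this squaring). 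Correcting both, your large case yields $\frac{\sqrt n\,\gamma^{2}}{2}\lambda_n(\Lat)\leq n^{3/2}\gamma^{4}\,d$, which misses the claimed factor $n^{3/2}\gamma^{3}$ (the fact that your stated version lands at $n^{3/2}\gamma^{2}$, strictly better than the theorem, is the symptom).

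The missing idea is the paper's construction in \cref{thm:absolu}: build the filtration by calling the $\gamma$-\HSVP{} oracle \emph{directly on the quotients} $\faktor{\Lat}{\Lat_i}$, and observe that all successive minima of such a quotient are at most $\lambda_n(\Lat)$, whence $\covol\left(\faktor{\Lat}{\Lat_i}\right)\leq\lambda_n(\Lat)^{\rk}$ and the Hermite guarantee alone gives a lifted vector with Gram--Schmidt length at most $\gamma\lambda_n(\Lat)$ --- one factor of $\gamma$, one oracle call per level, no Lov\'asz conversion in the primal branch. The $\gamma^{2}$ loss is then paid only in the dual branch, and the budget closes: $\frac{\sqrt n\,\gamma}{2}\cdot 2n\gamma^{2}=n^{3/2}\gamma^{3}$. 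This also dissolves your worry about the oracle count: a rank-$m$ level of the recursion costs $m$ calls for the filtration plus $2m$ calls for the dual short vector, and the naive level-by-level sum $\sum_{m\leq n}3m\leq 2n^{2}$ already meets the statement, with no need to share reduced data between strata (your Lov\'asz-based primal step is precisely what would inflate this to $\Theta(n^{3})$). Finally, a small fixable point: you should take $y^*$ primitive in $\dual{\Lat}$ so that $\{x\in\Lat:\langle y^*,x\rangle=k\}$ is exactly a coset of $\ker(y^*)\cap\Lat$.
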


The first step of this reduction consists in proving that we can find a
lattice point at a distance roughly $\lambda_n(\Lambda)$.

\begin{theorem}\label{thm:absolu}
  Let $\Lat$ be a lattice of rank $n$ and $t\in\Lat\otimes\RR$ a target
  vector, then  one can find a lattice vector $c\in\Lat$ satisfying
  \[\|c-t\|\leq\frac{\sqrt{n}\gamma}{2}\lambda_n(\Lambda),\]
  using $n$ calls to the oracle \oracle{} during precomputation, and
  polynomial time computations.
\end{theorem}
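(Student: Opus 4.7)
The plan is to precompute a basis $b_1,\dots,b_n$ of $\Lat$ whose Gram--Schmidt norms all satisfy $\|b_i^*\|\leq \gamma\lambda_n(\Lat)$, and then execute Babai's nearest plane algorithm on $t$ with this basis. Since that procedure outputs $c\in\Lat$ with $\|c-t\|\leq \frac{1}{2}\sqrt{\sum_i\|b_i^*\|^2}$ (cf.\ \cref{rem:babai}), the target inequality $\|c-t\|\leq \frac{\sqrt{n}\gamma}{2}\lambda_n(\Lat)$ drops out immediately. The oracle is used only during the basis construction, which depends solely on $\Lat$, so this is indeed a precomputation phase independent of $t$.

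I would build the basis recursively by peeling off one vector per level. Given a rank-$k$ lattice $\Lat^{(k)}$ (initialised with $\Lat^{(n)}=\Lat$), a single call to \oracle{} on $\Lat^{(k)}$ returns a vector $v\in\Lat^{(k)}$ with $\|v\|\leq \gamma\,\covol(\Lat^{(k)})^{1/k}$. After replacing $v$ by its primitive underlying vector if necessary, I take it as the next basis vector and recurse on the rank-$(k-1)$ orthogonal projection $\Lat^{(k-1)} = \pi_{v^\perp}(\Lat^{(k)})$. Because Gram--Schmidt commutes with successive orthogonal projections, the Gram--Schmidt norm attached to $v$ in the final basis of $\Lat$ is $\|v\|$ itself, while those of the later basis vectors coincide with the Gram--Schmidt norms produced inside $\Lat^{(k-1)}$. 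Hence it suffices to show that the vector $v$ obtained at every level satisfies $\|v\|\leq \gamma\lambda_n(\Lat)$.

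The heart of the argument is the invariant $\covol(\Lat^{(k)})^{1/k}\leq \lambda_n(\Lat)$ maintained throughout the recursion, which combined with the HSVP guarantee immediately yields $\|v\|\leq \gamma\lambda_n(\Lat)$. At the top level this is the Hadamard-type bound $\covol(\Lat)\leq \prod_{i=1}^n\lambda_i(\Lat)\leq \lambda_n(\Lat)^n$, obtained by considering the sublattice generated by vectors realising the successive minima. To propagate it I need the transfer of minima $\lambda_{k-1}(\Lat^{(k-1)})\leq \lambda_n(\Lat)$: projecting $n$ linearly independent vectors of $\Lat$ of norm $\leq\lambda_n(\Lat)$ onto $v^\perp$ kills at most one independent direction, hence leaves $k-1$ such vectors inside $\Lat^{(k-1)}$. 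Applying Hadamard again inside $\Lat^{(k-1)}$ then re-establishes the invariant one rank lower.

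Tallying the complexity, one oracle call is made per recursion depth for a total of at most $n$ calls during precomputation, while the online Babai step runs in polynomial time once the basis is available. The step I expect to require the most care is the preservation of the $\lambda_{k-1}$ bound after orthogonal projection, since a generic projection may well increase the $(k-1)$-th successive minimum; it is only the specific projection-by-a-short-lattice-vector structure that lets one transport $n$ linearly independent short vectors of $\Lat$ into $k-1$ equally short vectors of $\Lat^{(k-1)}$. Once this projection argument is in hand, the rest is a direct induction on the rank.
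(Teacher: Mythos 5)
Your proposal is correct and follows essentially the same route as the paper: call the $\gamma$-\HSVP{} oracle on the successive orthogonal projections (equivalently, the quotients $\faktor{\Lat}{\Lat_i}$), bound the covolume of each projected lattice by $\lambda_n(\Lat)^{\mathrm{rank}}$ via the transfer of successive minima and the Hadamard/sublattice argument, conclude every Gram--Schmidt norm is at most $\gamma\lambda_n(\Lat)$, and finish with Babai's nearest plane as in \cref{rem:babai}. Your explicit justification of the minima transfer under projection is exactly the step the paper states tersely as $\lambda_{n-i+1}\left(\faktor{\Lat}{\Lat_i}\right)\leq\lambda_n(\Lat)$, so there is no substantive difference.
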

\begin{proof}
  We aim at constructing a complete filtration \[ \{0\} \subset \Lat_1 \subset
  \cdots \subset \Lat_n = \Lat\] of the input lattice $\Lat$ such that for any
  index $1\leq i\leq n-1$, we have: \[
  \covol\left(\faktor{\Lat_i}{\Lat_{i-1}}\right) \leq \gamma
\lambda_n(\Lat).\]
We proceed inductively:
\begin{itemize}
  \item By a call to the oracle \oracle{} on the lattice $\Lat$, we find a
    vector $b_1$. Set $\Lat_1 = b_1\ZZ$ the corresponding sublattice.
  \item Suppose that the filtration is constructed up to index $i$.
    Then we call the oracle \oracle{} on the quotient sublattice
    $\faktor{\Lat}{\Lat_i}$ (or equivalently on the projection of $\Lat$
    orthogonally to $\Lat_i$), and lift the returned vector
    using the \algName{lift} function in $v \in \Lat$. Eventually we set
    $\Lat_{i+1} = \Lat_i\oplus v\ZZ$.
\end{itemize}

  At each index, we have by construction
  $\lambda_{n-i+1}\left(\faktor{\Lat}{\Lambda_{i}}\right)\leq
  \lambda_n(\Lambda)$. As such,  $\covol\left(\faktor{\Lat}{\Lambda_{i}}\right)\leq
  \lambda_n(\Lambda)^{n-i+1}$, and, eventually, we have for each index $i$:
  \[ \covol\left(\faktor{\Lat_i}{\Lat_{i-1}}\right)\leq
  \gamma\cdot\lambda_n(\Lambda). \]

  As stated in \cref{rem:babai}, Babai's algorithm on the point $t$ returns a
  lattice vector $c\in\Lat$ such that:
  \[ \|c-t\|\leq \sqrt{\sum_{i=1}^{n}
    \mu\left(\faktor{\Lat_i}{\Lat_{i-1}}\right)^2}\leq
    \frac{\sqrt{n}\gamma\lambda_n(\Lambda)}{2}.\]
\end{proof}

\begin{remark}[On the quality of this decoding]
  For a random lattice, we expect $\lambda_n(\Lambda)\approx
  \sqrt{n}\covol(\Lambda)^{\frac1n}$, so that the distance between the decoded
  vector and the target is only a factor $\gamma$ times larger than the guaranteed
  output of the oracle.
\end{remark}

We can now complete the reduction:\\
\noindent\textit{Proof of \cref{thm:main_result}.}
Let $\Lat$ be a rank $n$ lattice.
Without loss of generality, we might assume that the norm $\|.\|$ of $\Lat$
coincides with its dual norm, so that the dual $\dual{\Lat}$ can be
isometrically embedded in $\Lat_\RR$.
  We first find a non-zero vector in the dual lattice: $c\in \dual{\Lambda}$, where
  $\|c\|\leq \gamma^2
  \lambda_1(\dual{\Lambda})$ using Lov\'asz's reduction stated in
  \cref{thm:lovasz_reduction} on the oracle \oracle.
  Define $v\in\Lat$ and $e\in\Lat\otimes\RR$ to satisfy
  $t=v+e$ with $\|e\|$ minimal. We now have two cases, depending on
  how large is the error term $e$:
  \begin{description}
    \item[Case $\|c\|\|e\|\geq 1/2$ (large case)] Then, by pluging
      Banaszczyk's transference inequality to the bound on $\|c\|$ we get:
      \[ \|e\|\geq \frac{1}{2\gamma^2\lambda_1(\dual{\Lambda})}\geq
      \frac{\lambda_n(\Lambda)}{2n\gamma^2}. \]
      Thus, we can use \cref{thm:absolu} to solve \ACVP{} with
      approximation factor equal to:
      \[
        \frac{\sqrt{n}
        \gamma}{2}\left({\frac{1}{2n\gamma^2}}\right)^{-1}=n^{\frac32}\gamma^3.
      \]

    \item[Case $\|c\|\|e\|<1/2$ (small case)]
  \end{description}
      \parbox[t]{\dimexpr\textwidth-\leftmargin+4em}{Then, we have by linearity
      $\inner{c}{t}=\inner{c}{v}+\inner{c}{e}$. Hence, by the Cauchy-Schwarz
      inequality and the assumption on $\|c\|\|e\|$ we can assert that:
      \[ \lfloor \inner{c}{t} \rceil=\inner{c}{v}. \]
      Let $\Lat'$ be the projection of $\Lambda$ over the orthogonal space to
      $c$ and denote by $\pi$ the corresponding orthogonal projection.
      \begin{wrapfigure}{L}{0.5\textwidth}
        \centering
        \begin{tikzpicture}[scale=2.5]
\draw [lightgray] [->] (0,0) -- (0, 1.85);
          \draw [lightgray] [->] (0,0) -- (2.3,0);
          \node[lightgray] at (2.5,0) {\small $c\RR$};
          \node[lightgray] at (0,2) {\small $(c\RR)^\bot$};

          \draw [goldOrange, dashed] [] (1.5,0.5) -- (0,0.5);
          \draw [duckBlue, dashed ] [] (1.2,1.5) -- (0,1.5);
          \draw [goldOrange] []  (-0.25,0.5) node {\small $\pi(t)$};
          \draw [duckBlue] []  (-0.25,1.5) node {\small $\pi(v)$};
          \draw [fill,lightgray] (0,0.5) circle [radius=0.03];
          \draw [fill,lightgray] (0,1.5) circle [radius=0.03];
\draw [fill, goldOrange, opacity=0.3] (1,0) -- (1,2) -- (2,2) -- (2,0);
          \node[goldOrange] at (1.9,1.9) {\small $\mathcal{D}$};
          \draw [goldOrange, opacity=1] [] (1,0) -- (1,2);
          \draw [goldOrange, opacity=1] [] (2,2) -- (2,0);
          \draw [thick, goldOrange, shorten >=0.5] [->] (0,0) -- (1.5,0.5);
          \draw [thick, duckBlue, shorten >=0.7, shorten <=0.8] [<-] (1.2,1.5) -- (1.5,0.5);
          \draw [fill,goldOrange,opacity=.4] (1.5,0.5) circle [radius=0.03];
          \draw [goldOrange] (1.6,0.4) node {\small $t$};

          \draw [fill,duckBlue, opacity=.8] (1.2,1.5) circle [radius=0.03];
          \draw [duckBlue] (1.1,1.65) node {\small $v$};
          \draw [duckBlue, dashed] [] (1.2,0) -- (1.2,1.5);
          \draw [goldOrange, dashed] [] (1.5,0) -- (1.5,0.5);
          \node[goldOrange, rotate=-90] at (1.5,-0.21) {\tiny $\inner{c}{t}$};
          \node[goldOrange, rotate=-90] at (1,-0.35) {\tiny $\inner{c}{t}-\frac12$};
          \node[goldOrange, rotate=-90] at (2,-0.35) {\tiny $\inner{c}{t}+\frac12$};
          \node[duckBlue, rotate=-90] at (1.2,-0.21) {\tiny $\inner{c}{v}$};
          \node[gray, rotate=-90] at (0.2,-0.35) {\tiny $\inner{c}{v}-1$};
          \node[gray, rotate=-90] at (2.2,-0.35) {\tiny $\inner{c}{v}+1$};

\draw [gray] []  (-0.25,1) node {\small $\tilde{p}$};
          \draw [fill,gray] (0,1) circle [radius=0.03];
          \draw [fill,gray] (1.2,1) circle [radius=0.03];
          \draw [fill,gray] (0.2,1) circle [radius=0.03];
          \draw [fill,gray] (2.2,1) circle [radius=0.03];
          \draw [gray, dashed] [] (0.2,0) -- (0.2,1);
          \draw [gray, dashed] [] (2.2,0) -- (2.2,1);
          \draw [gray] [] (0,1) -- (2.2,1);
          \node[gray] at (2.5,1) {\tiny $\pi^{-1}(\tilde{p})$};
        \end{tikzpicture}
        \caption{\label{fig:illustatration_proof}Illustration of the situation
        depicted in the proof, in the two dimensional case.}
      \end{wrapfigure}

      Let us prove that $\pi(v)$ is a closest vector of $\pi(t)$ in
      $\Lat'$. To do so, let us take $\tilde{p}$ a shortest vector $\pi(t)$ in
      $\Lat$. We now look at the fibre (in $\Lat$) above $\tilde{p}$ and take
      the closest element $p$ to $t$ in this set. Then by Pythagoras' theorem,
      $p$ is an element of the intersection of
      $\pi^{-1}(\tilde{p})$ with the convex body $\mathcal{D} = \left\{x \,|\,
  |\inner{c}{x}|<\frac12\right\}$. As the vector $c$ belongs to the dual
  of $\Lat$, we have that for any $p_1, p_2\in\pi^{-1}(\tilde{p}),
  \inner{p_1-p_2}{c}\in\ZZ$, so that $\pi^{-1}(\tilde{p})\cap\mathcal{D}$ is
  of cardinality one. Write $p$ for this point. Then, $\inner{p}{c} =
  \inner{v}{c}$, as $|\inner{p-v}{c}|<1/2$ and is an integer.
  Now remark that by
  minimality of $\|v-t\|$, we have by Pythagoras' theorem that $v=p$, implying
  that $\pi(v)=\tilde{p}$.\\

By induction, we find $w\in \Lat$ such that
\[ \|\pi(w-t)\|\leq n^{3/2}\gamma^3 \|\pi(v-t)\| \]
and since $\inner{c}{w-t}=\inner{c}{v-t}$ we obtain
\[ \|w-t\|\leq n^{3/2}\gamma^3 \|v-t\|. \]
  \qed
    }
Overall, we get the following corollary by using the Micciancio-Voulgaris
algorithm for the oracle \oracle:
\begin{corollary}
  We can solve $\beta^{\bigO{\frac{n}{\beta}}}$-\ACVP{} deterministically in
  time bounded by $2^\beta$ times the size of the input.
\end{corollary}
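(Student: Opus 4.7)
The plan is to instantiate the $\gamma$-\HSVP{} oracle required by Theorem~\ref{thm:main_result} with an explicit deterministic algorithm, and then read off the resulting parameters.

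First, I would build the oracle by plugging the deterministic Micciancio--Voulgaris exact \CVP{} routine into the \DBKZ{} reduction of Theorem~\ref{thm:BKZ} with block size $\beta$. The inner routine costs $T(\beta)=(4+\littleO{1})^{\beta/2}$, i.e.\ $2^{\beta}$ up to subexponential factors, so one call to the reduction runs in time $2^{\beta}\cdot\poly(n,\log\|B\|)$ and returns a vector of norm at most $\beta^{\frac{n-1}{2(\beta-1)}}\covol(\Lat)^{1/n}$. This is exactly a $\gamma$-\HSVP{} solver with $\gamma=\beta^{\frac{n-1}{2(\beta-1)}}=\beta^{\bigO{n/\beta}}$, and the same construction applies verbatim on each of the projected sublattices that Theorem~\ref{thm:main_result} hands to the oracle.

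Second, I would invoke Theorem~\ref{thm:main_result} on this oracle: it outputs a lattice vector within factor $n^{3/2}\gamma^3$ of the target, using $2n^{2}$ oracle invocations plus polynomial-time online work. Substituting the value of $\gamma$ and absorbing the prefactor $n^{3/2}$ into the exponent yields an approximation guarantee of $\beta^{\bigO{n/\beta}}$, while the cumulative running time $2n^{2}\cdot 2^{\beta}\cdot\poly(n,\log\|B\|,\log\|t\|)$ remains bounded by $2^{\beta}$ times the input size, as claimed.

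The whole argument is bookkeeping once both ingredients are combined. The only point that deserves a sanity check is that the polynomial prefactor $n^{3/2}$ does indeed fit inside $\beta^{\bigO{n/\beta}}$: this holds whenever $\beta^{n/\beta}$ dominates $n^{3/2}$, which covers all meaningful block sizes; in the degenerate regime $\beta=\Theta(n)$ the stated bound degenerates to $2^{\bigO{n}}$, which is already achieved in one shot by Micciancio--Voulgaris applied directly to the full lattice.
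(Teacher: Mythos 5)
Your proposal is essentially the paper's own argument: the corollary is obtained precisely by instantiating the oracle \oracle{} of \cref{thm:main_result} with the deterministic \DBKZ{}-based $\gamma$-\HSVP{} solver of \cref{thm:BKZ} (block oracle given by Micciancio--Voulgaris, $T(\beta)=(4+\littleO{1})^{\beta/2}$), then substituting $\gamma=\beta^{\frac{n-1}{2(\beta-1)}}=\beta^{\bigO{n/\beta}}$ into the factor $n^{3/2}\gamma^3$ and into the $2n^2$-call time bound, exactly as you do. One small nit: the block routine should be described as the deterministic $\bigO{\sqrt{\beta}}$-\HSVP{} solver with cost $(4+\littleO{1})^{\beta/2}$ rather than ``exact \CVP{} in dimension $\beta$'' (which would cost $(4+\littleO{1})^{\beta}$), but since you use the paper's $T(\beta)$ in the accounting this does not affect the conclusion.
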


\begin{remark}
  Using exactly the scheme proof scheme, we can refine the approximation
  factor to a $n^{3/2}\gamma_S \gamma$ by using a separate $\gamma_S$-SVP
  oracle instead of using $\gamma$-\HSVP{} as a $\gamma^2$-\SVP{} oracle.
\end{remark}

\section{Cryptographic perspectives}
\label{sec:applications}

In cryptography, the \textsc{Bounded Distance Decoding} (\textsc{bdd})
problem\footnote{ This problem being defined as finding the closest lattice
vector of a target, provided it is within a fraction of $\lambda_1(\Lambda)$.}
has a lot of importance, as it directly relates to the celebrated
\emph{Learning With Error} problem (\textsc{lwe})~\cite{regev2009lattices}.
This latter problem can be reduced to \ACVP, however our theoretical reduction
with \HSVP{} has a loss which is too large to be competitive.

In the so-called GPV framework~\cite{STOC:GenPeiVai08}, instantiated in the
\texttt{DLP} cryptosystem~\cite{AC:DucLyuPre14} and its follow-ups
\texttt{Falcon}~\cite{fouque2018falcon}, \texttt{ModFalcon}~\cite{modfalcon},
a valid signature is a point close to a target, which is the hash of the
message. Hence, forging a signature boils down to finding a close vector to a
random target.  Our first (heuristic) result implies that, once a reduced
basis has been found, forging a message is relatively easy.
Previous methods such as in~\cite{fouque2018falcon} used Kannan's
embedding~\cite{kannan1987minkowski} so that the cost given only applies for
one forgery, whereas a batch forgery is possible for roughly the same cost.

The same remark applies for practically solving the \textsc{bdd} problem, and
indeed the \textsc{lwe} problem.  Once a highly reduced basis is found, it is
enough to compute a \CVP{} on the tail of the basis, and finish with Babai's
algorithm. More precisely, by using the same notations an exploiting the proof of
\cref{thm:quality_complexity_average}, a sufficient condition for decoding will
be:
\[
  \|\pi(e)\| \leq \theta(\beta)^{\frac{2\beta-n}{2\beta}}\covol(\Lat)^\frac1n,
\]
where, $\pi$ is the orthogonal projection onto $\faktor{\Lat}{\Lat_k}$ and
$\beta$ is the rank of this latter lattice.

This trick seems to have been in the folklore for some time, and
is the reason given by \texttt{NewHope}~\cite{USENIX:ADPS16} designers for
selecting a random ``$a$'', which corresponds to a random lattice (where the
authors of~\cite{USENIX:ADPS16} claim that Babai's algorithm is enough, but it
seems to be practically true in general for an extremely well reduced basis,
i.e.\ with more precomputations performed).

\section*{Acknowledgments}
This work was done while the authors were visiting the Simons Institute for
the theory of computing in February 2020. They also thanks the anonymous
reviewers for their insightful comments on this work.


\begin{thebibliography}{10}

\bibitem{FOCS:AggDadSte15}
D.~Aggarwal, D.~Dadush, and N.~{Stephens-Davidowitz}.
\newblock Solving the closest vector problem in {$2^n$} time - the discrete
  {Gaussian} strikes again!
\newblock In {\em 56th FOCS}. {IEEE}
  Computer Society Press. 2015.

\bibitem{aggarwal2019slide}
D.~Aggarwal, J.~Li, P.~Q. Nguyen, and N.~Stephens-Davidowitz.
\newblock Slide reduction, revisited---filling the gaps in {SVP} approximation.
\newblock {\em arXiv preprint arXiv:1908.03724}, 2019.

\bibitem{ajtai2002sampling}
M.~Ajtai, R.~Kumar, and D.~Sivakumar.
\newblock Sampling short lattice vectors and the closest lattice vector
  problem.
\newblock In {\em Proceedings 17th IEEE Annual Conference on Computational
  Complexity}. IEEE, 2002.

\bibitem{USENIX:ADPS16}
E.~Alkim, L.~Ducas, T.~P{\"o}ppelmann, and P.~Schwabe.
\newblock Post-quantum key exchange - {A} new hope.
\newblock In {\em USENIX Security 2016.}

\bibitem{Babai86}
L.~Babai.
\newblock {On Lov{\'{a}}sz' lattice reduction and the nearest lattice point
  problem}.
\newblock {\em Combinatorica}, 6(1), 1986.

\bibitem{banaszczyk1993new}
W.~Banaszczyk.
\newblock New bounds in some transference theorems in the geometry of numbers.
\newblock {\em Mathematische Annalen}, 296(1), 1993.

\bibitem{SODA:BDGL16}
A.~Becker, L.~Ducas, N.~Gama, and T.~Laarhoven.
\newblock New directions in nearest neighbor searching with applications to
  lattice sieving.
\newblock In  {\em 27th SODA}. {ACM-SIAM}, 2016.

\bibitem{blomer2009sampling}
J.~Bl{\"o}mer and S.~Naewe.
\newblock Sampling methods for shortest vectors, closest vectors and successive
  minima.
\newblock {\em Theoretical Computer Science}, 410(18) 2009.

\bibitem{modfalcon}
C.~Chuengsatiansup, T.~Prest, D.~Stehl{\'e}, A.~Wallet, and K.~Xagawa.
\newblock Modfalcon: compact signatures based on module {NTRU} lattices.
\newblock {\em {IACR} Cryptology ePrint Archive}, 2019.

\bibitem{dadush2014closest}
D.~Dadush, O.~Regev, and N.~Stephens-Davidowitz.
\newblock On the closest vector problem with a distance guarantee.
\newblock In {\em 2014 IEEE 29th Conference on Computational Complexity (CCC)},
  IEEE, 2014.

\bibitem{dinur1998approximating}
I.~Dinur, G.~Kindler, and S.~Safra.
\newblock Approximating-{CVP} to within almost-polynomial factors is np-hard.
\newblock In {\em Proceedings 39th Annual Symposium on Foundations of Computer
  Science}. IEEE, 1998.

\bibitem{dubey2011approximating}
  C.~Dubey, and T~.Holenstein.
  \newblock Approximating the closest vector problem using an approximate shortest vector oracle
  \newblock Approximation, Randomization, and Combinatorial Optimization.
  Algorithms and Techniques. 2011

\bibitem{sieve20}
L.~Ducas, T.~Laarhoven, and W.~P. van Woerden.
\newblock The randomized slicer for {CVPP}: sharper, faster, smaller, batchier.
\newblock Cryptology ePrint, Report 2020/120.

\bibitem{AC:DucLyuPre14}
L.~Ducas, V.~Lyubashevsky, and T.~Prest.
\newblock Efficient identity-based encryption over {NTRU} lattices.
\newblock In  {\em ASIACRYPT~2014}. Springer 2014.

\bibitem{eisenbrand2011covering}
F.~Eisenbrand, N.~H{\"a}hnle, and M.~Niemeier.
\newblock Covering cubes and the closest vector problem.
\newblock In {\em the 27th symposium on
  Computational geometry}, 2011.

\bibitem{fouque2018falcon}
P.-A. Fouque, J.~Hoffstein, P.~Kirchner, V.~Lyubashevsky, T.~Pornin, T.~Prest,
  T.~Ricosset, G.~Seiler, W.~Whyte, and Z.~Zhang.
  \newblock Falcon: Fast-fourier lattice-based compact signatures over {NTRU}.
\newblock {\em Submission to the NIST's post-quantum cryptography
  standardization process}, 2018.

\bibitem{STOC:GamNgu08}
N.~Gama and P.~Q. Nguyen.
\newblock Finding short lattice vectors within {Mordell}'s inequality.
\newblock In {\em 40th ACM STOC}. {ACM}
Press, 2008.

\bibitem{STOC:GenPeiVai08}
C.~Gentry, C.~Peikert, and V.~Vaikuntanathan.
\newblock Trapdoors for hard lattices and new cryptographic constructions.
\newblock In {\em 40th ACM STOC}. {ACM}
Press, 2008.

\bibitem{goldreich1999approximating}
O.~Goldreich, D.~Micciancio, S.~Safra, and J.-P. Seifert.
\newblock Approximating shortest lattice vectors is not harder than
  approximating closest lattice vectors.
\newblock {\em Information Processing Letters}, 71(2) 1999.

\bibitem{C:HanPujSte11}
G.~Hanrot, X.~Pujol, and D.~Stehl{\'e}.
\newblock Analyzing blockwise lattice algorithms using dynamical systems.
\newblock In {\em CRYPTO~2011}. Springer, 2011.

\bibitem{C:HanSte07}
G.~Hanrot and D.~Stehl{\'e}.
\newblock Improved analysis of kannan's shortest lattice vector algorithm.
\newblock In  {\em CRYPTO~2007}. Springer, 2007.

\bibitem{STOC:HavReg07}
I.~Haviv and O.~Regev.
\newblock Tensor-based hardness of the shortest vector problem to within almost
  polynomial factors.
\newblock In {\em 39th ACM STOC}.
  {ACM} Press, 2007.

\bibitem{PKC:HerKirLaa18}
G.~Herold, E.~Kirshanova, and T.~Laarhoven.
\newblock Speed-ups and time-memory trade-offs for tuple lattice sieving.
\newblock In  {\em PKC~2018}. Springer, 2018.

\bibitem{kannan1987minkowski}
R.~Kannan.
\newblock Minkowski's convex body theorem and integer programming.
\newblock {\em Mathematics of operations research}, 12(3) 1987.

\bibitem{laarhoven2015finding}
T.~Laarhoven, M.~Mosca, and J.~Van De~Pol.
\newblock Finding shortest lattice vectors faster using quantum search.
\newblock {\em Designs, Codes and Cryptography}, 77(2-3) 2015.

\bibitem{LLL82}
A.~K. Lenstra, H.~W.~J. Lenstra, and L.~Lov{\'a}sz.
\newblock {Factoring polynomials with rational coefficients}.
\newblock {\em Math. Ann.}, 261 1982.

\bibitem{lovasz1986algorithmic}
L.~Lov{\'a}sz.
\newblock {\em An algorithmic theory of numbers, graphs, and convexity.}
\newblock SIAM, 1986.

\bibitem{SODA:MicVou10}
D.~Micciancio and P.~Voulgaris.
\newblock Faster exponential time algorithms for the shortest vector problem.
\newblock In  {\em 21st SODA}. {ACM-SIAM}, 2010.

\bibitem{EC:MicWal16}
D.~Micciancio and M.~Walter.
\newblock Practical, predictable lattice basis reduction.
\newblock In  {\em EUROCRYPT~2016}. Springer, 2016.

\bibitem{nguyen2010lll}
P.~Q. Nguyen and B.~Vall{\'e}e.
\newblock {\em The LLL algorithm}.
\newblock Springer, 2010.

\bibitem{regev2009lattices}
O.~Regev.
\newblock On lattices, learning with errors, random linear codes, and
  cryptography.
\newblock {\em Journal of the ACM (JACM)}, 56(6) 2009.

\bibitem{rogers1955mean}
C.~A. Rogers et~al.
\newblock Mean values over the space of lattices.
\newblock {\em Acta mathematica}, 94 1955.

\bibitem{Schnorr87}
C.~Schnorr.
\newblock {A hierarchy of polynomial time lattice basis reduction algorithms}.
\newblock {\em Theor. Comput. Sci.}, 53 1987.

\bibitem{Siegel}
C.~L. Siegel.
\newblock {A mean value theorem in Geometry of Numbers}.
\newblock {\em Annals of Mathematics}, 46(2) 1945.

\bibitem{stephens2019time}
N.~Stephens-Davidowitz.
\newblock A time-distance trade-off for {GDD} with preprocessing---instantiating
the {DLW} heuristic.
\newblock {\em arXiv preprint arXiv:1902.08340}, 2019.

\bibitem{RSA:WeiLiuWan15}
W.~Wei, M.~Liu, and X.~Wang.
\newblock Finding shortest lattice vectors in the presence of gaps.
\newblock In  {\em CT-RSA~2015}. Springer, 2015.

\end{thebibliography}
\end{document}